\journal{Artificial Intelligence}
\def\ps@pprintTitle{%
  \let\@oddhead\@empty
  \let\@evenhead\@empty
  \let\@oddfoot\@empty
  \let\@evenfoot\@oddfoot
}
\newtheorem{definition}{Definition}
\newtheorem{theorem}{Theorem}
\newtheorem{lemma}{Lemma}
\newtheorem{claim}{Claim}
\newenvironment{proof}{\noindent{\sf Proof.}}{\hfill $\boxtimes\hspace{2mm}$\linebreak}
\renewcommand{\phi}{\varphi}
\renewcommand{\epsilon}{\varepsilon}
\renewcommand{\phi}{\varphi}
\renewcommand{\epsilon}{\varepsilon}
\newcommand{\K}{{\sf K}}
\newsavebox{\diamonddotsavebox}
\sbox{\diamonddotsavebox}{$\Diamond$\hspace{-1.8mm}\raisebox{0.3mm}{$\cdot$}\hspace{1mm}}
\begin{document}
\begin{frontmatter}
\title{Intelligence in Strategic Games}

\author{Pavel Naumov}
\address{Claremont McKenna College, Claremont, California, USA}
\ead{pgn2@cornell.edu}
\author{Yuan Yuan}
\address{Vassar College, Poughkeepsie, New York, USA}
\ead{yyuan@vassar.edu}

\begin{abstract}
The article considers strategies of coalitions that are based on intelligence information about moves of some of the other agents. The main technical result is a sound and complete logical system that describes the interplay between coalition power modality with intelligence and distributed knowledge modality in games with imperfect information.
\end{abstract}

\end{frontmatter}
\section{Introduction}

The Battle of the Atlantic was a classical example of the matching pennies game. British (and American) admirals were choosing routes of the allied convoys and Germans picked routes of their U-boats. If their trajectories crossed, the Germans scored a win, if not the allies did. Neither of the players appeared to have a {\em strategy} that would guarantee victory. 

The truth, however, was that during most of the battle one of the sides had exactly such a strategy. 
First, it was British who broke German Enigma cipher in summer 1941.
Although Germans did not know about British success, they changed codebook and added fourth wheel to Enigma in February 1942 thus preventing British from decoding German messages. 
The very next month, in March 1942, German navy cryptography unit, B-Dienst,  broke allied code and got access to convoy route information. 
Germans lost their ability to read allied communication in December 1942 due to a routine change in the allied codebook.
The same month, British were able to read German communication as a result of capturing codebook from a U-boat in Mediterranean.
In March 1943, Germans changed codebook again and, unknowingly, disabled British ability to read German messages. Simultaneous, Germans caught up and started to decipher British transmissions again~\cite{b02ijnh,s03}.

At almost any moment during these two years one of the sides was able to read the communications of the other side. However, neither of them was able to figure out that their own code is insecure because the two sides never have been able to read each other messages at the same time to notice that the other side knows more than it should have known. Finally, in May 1943, with help of US Navy, British cracked German messages while Germans still were reading British. It was the first time allies understood that their code was insecure. A new convoy cipher was immediately introduced and Germans have never been able to break it again, while allies continued reading Enigma-encrypted transmissions till the end of the war~\cite{b02ijnh}.   

In this article we study coalition power in strategic games assuming that the coalition has intelligence information about moves of all or some of its opponents. We write $[C]_I\phi$ if coalition $C$ has a strategy to achieve outcome $\phi$ as long as the coalition knows what will be the move of each agent in set $I$. For example,
$$
[British]_{Germans}(\mbox{Convoy is saved}).
$$
Modality $[C]_\varnothing\phi$ is the coalitional power modality proposed by Marc Pauly~\cite{p01illc,p02}. He gave a sound and complete axiomatization of this modality in the case of perfect information strategic games. Various extensions of his logic has been studied before~\cite{g01tark,vw05ai,b07ijcai,sgvw06aamas,abvs10jal,avw09ai,b14sr,gjt13jaamas,ge18aamas}. Strategic power modality with intelligence $[a_1,\dots,a_n]_{i_1,\dots,i_k}\phi$ can be expressed in Strategy Logic~\cite{chp10ic,mmpv14tocl} as 
$$
\forall t_1\dots\forall t_k\exists s_1,\dots\exists s_n (a_1,s_1)\dots(a_n,s_n)( i_1,t_1)\dots (i_k,t_k){\sf X}\phi.
$$
The literature on the strategy logic covers model checking~\cite{bmmrv17lics}, synthesis~\cite{clm15aaai},  decidability~\cite{mmpv12concur,vpmm17lmcs}, and bisimulation~\cite{bdm18kr}. We are not aware of any completeness results for a strategy logic with quantifiers over strategies. At the same time, our approach is different from the one in Alternating-time Temporal Logic with Explicit Strategies (ATLES)~\cite{whw07tark}. There, modality $\langle\langle C\rangle\rangle_\rho$  denotes existence existence of a strategy of coalition $C$ for a fixed commitment $\rho$ of some of the other agents. Unlike ATLES, out modality $[C]_B$ denotes existence of a strategy of coalition of coalition $C$ for {\em any} commitment of coalition $B$, as long as it is known to $C$.
Goranko and Ju proposed several versions of strategic power with intelligence modality, gave formal semantics of these modalities, and discussed a matching notion of bisimulation~\cite{gj19lori}. They do not suggest any axioms for these modalities.  

An important example of intelligence in strategic games comes from {\em Stackelberg security games}. These are two-player games between a defender and an intruder. The defender is using a {\em mixed} strategy to assign available resources to targets and the intruder is using a pure strategy to attack one of the targets. The distinctive property of the security games is the assumption that the intruder knows the probabilities with which the defender assigns resources to different targets. The intruder uses this information to plan the attack that is likely to bring the most damage. In other words, it is assumed that the intruder has the intelligence about the mixed strategy deployed by the defender. Security games have been used by the U.S. Transportation Security Administration, the U.S. Federal Air Marshal Service,  the U.S. Coast Guard, and others~\cite{sfakt18ijcai}.   

Recently, logics of coalition power were generalized to imperfect information games. Unlike prefect information strategic games, the outcome of an imperfect information game might depend on the initial state of the game that could be unknown to the players. For example, consider a hypothetical setting in which an allied convoy and a German U-boat have to choose between three routes from point A to point B: route 1, route 2, or route 3, see Figure~\ref{atlantic figure}. Let us furthermore assume that it is known to both sides that one of these routes is blocked by Russian naval mines. Although the mines are located along route 1, neither allies nor Germans known this. If allies have an access to intelligence about German U-boats, then, in theory, they have a strategy to save the convoy. For example, if Germans will use route 2, then allies can use route 3. However, since allies do not know the location of Russian mines, even after they receive information about German plans, they still would not know how to save the convoy.  

\begin{figure}[ht]
\begin{center}
\scalebox{0.7}{\includegraphics{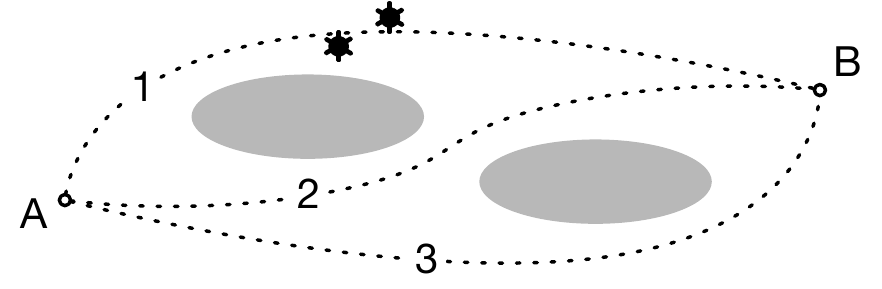}}
\caption{Three routes from point $A$ to point $B$.}\label{atlantic figure}
\vspace{-2mm}
\end{center}
\end{figure}

It has been suggested in several recent works that, in the the case of the games with imperfect information, strategic power modality in Marc Pauly logic  should be restricted to existence of {\em know-how}~\footnote{Know-how strategies were studied before under different names. While Jamroga and {\AA}gotnes talked about ``knowledge to identify and execute a strategy"~\cite{ja07jancl},  Jamroga and van der Hoek discussed ``difference between an agent knowing that he has a suitable strategy and knowing the strategy itself"~\cite{jv04fm}. Van Benthem called such strategies ``uniform"~\cite{v01ber}. Wang gave a complete axiomatization of ``knowing how" as a binary modality~\cite{w15lori,w17synthese}, but his logical system does not include the knowledge modality.} strategies~\cite{aa16jlc,nt17aamas,fhlw17ijcai,nt18aaai,nt18ai,nt18aamas}. That is, modality $[C]\phi$ should stands for ``coalition $C$ has a strategy, it knows that it has a strategy, and it knows what the strategy is''. In this article we adopt this approach to strategic power with intelligence. For example, in the imperfect information setting depicted in Figure~\ref{atlantic figure}, after receiving the intelligence report, the British have a strategy, they know that they have a strategy, but they {\em do not know} what the strategy is:
$$
\neg[British]_{Germans}(\mbox{Convoy is saved}).
$$
At the same time, since Russians presumably know the location of their mines,
$$
[\mbox{British, Russians}]_{\footnotesize\mbox{Germans}}(\mbox{Convoy is saved}).
$$

The main contribution of this article is a complete logical system that describes the interplay between the coalition power with intelligence modality $[C]_I$ and the distributed knowledge modality $\K_C$ in an imperfect information setting. The most interesting axiom of our system is a generalized version of Marc Pauly's~\cite{p01illc,p02} Cooperation axiom that connects intelligence $I$ and coalition $C$ parameters of the modality $[C]_I$. Our proof of the completeness is significantly different from the existing proofs of completeness for games with imperfect information~\cite{aa16jlc,nt17aamas,nt18aaai,nt18ai,nt18aamas}. We highlight these differences in the beginning of the Section~\ref{completeness section}. 

\section{Outline}

The rest of the article is organized as follows. In the next section we introduce the syntax and the formal semantics of our logical system. In Section~\ref{axioms section}, we list the axioms and the inference rules of the system, compare them to related axioms in the previous works and give two examples of formal proofs in our system. In the Section~\ref{soundness section} and Section~\ref{completeness section} we prove the soundness and the completeness of our logical system respectively. Section~\ref{concusion section} concludes. 

\section{Syntax and Semantics}\label{syntax and semantics section}

In this section we define the syntax and the semantics of our formal system. Throughout the article we assume a fixed set of propositional variables and a fixed set of agents $\mathcal{A}$. By a coalition we mean any finite subset of $\mathcal{A}$. Finiteness of coalitions will be important for the proof of the completeness.

\begin{definition}\label{Phi}
Let $\Phi$ be the minimal set of formulae such that
\begin{enumerate}
    \item $p\in \Phi$ for each propositional variable $p$,
    \item $\phi\to\psi, \neg\phi\in\Phi$ for all $\phi,\psi\in\Phi$,
    \item $\K_C\phi\in \Phi$ for each formula $\phi\in\Phi$ and each coalition $C\subseteq \mathcal{A}$,
    \item $[C]_B\phi\in \Phi$ for each formula $\phi\in\Phi$ and all disjoint coalitions $B,C\subseteq \mathcal{A}$.
\end{enumerate}
\end{definition}
In other words, the language of our logical system is defined by  grammar:
$$
\phi := p\;|\;\neg\phi\;|\;\phi\to\phi\;|\;\K_C\phi\;|\;[C]_B\phi. 
$$
Formula $\K_C\phi$ stands for ``coalition $C$ distributively knows $\phi$'' and formula $[C]_B\phi$ for ``coalition $C$ distributively knows strategy to achieve $\phi$ as long as it gets an intelligence on actions of coalition $B$''.

For any sets $X$ and $Y$, by $X^Y$ we mean the set of all functions from $Y$ to $X$. 

\begin{definition}\label{transition system}
A tuple $(W,\{\sim_a\}_{a\in \mathcal{A}},\Delta,M,\pi)$ is called a game if
\begin{enumerate}
    \item $W$ is a set of states,
    \item $\sim_a$ is an ``indistinguishability'' equivalence relation on set $W$ for each agent $a\in\mathcal{A}$,
    \item $\Delta$ is a nonempty set, called the ``domain of actions'', 
    \item relation $M\subseteq W\times \Delta^\mathcal{A}\times W$ is an ``aggregation mechanism'',
    \item function $\pi$ maps propositional variables to subsets of $W$.
\end{enumerate}
\end{definition}
A function $\delta$ from set $\Delta^\mathcal{A}$ is called a {\em complete action profile}. 

Figure~\ref{atlantic states figure} depicts a diagram of the Battle of the Atlantic game  with imperfect information, as described in the introduction. For the sake of simplicity, we treat British, Germans, and Russians as single agents, not groups of agents. The game has five states: 1, 2, 3, $s$, and $d$. States 1, 2, and 3 are three ``initial'' states that correspond to possible locations of Russian mines along route 1, route 2, or route 3. Neither British nor Germans can distinguish these states, which is shown in the diagram by labels on dashed lines connecting these three states. Russians know location of the mines and, thus, can distinguish these states. The other two states are ``final'' states $s$ and $d$ that describe if the convoy made it safe ($s$) or was destroyed ($d$) by either a U-boat or a mine. The designation of some states as ``initial'' and others as ``final'' is specific to the Battle of the Atlantic game. In general, our Definition~\ref{transition system} does not distinguish between such states and we allow games to take multiple consecutive transitions from one state to another.

\begin{figure}[ht]
\begin{center}
\scalebox{0.7}{\includegraphics{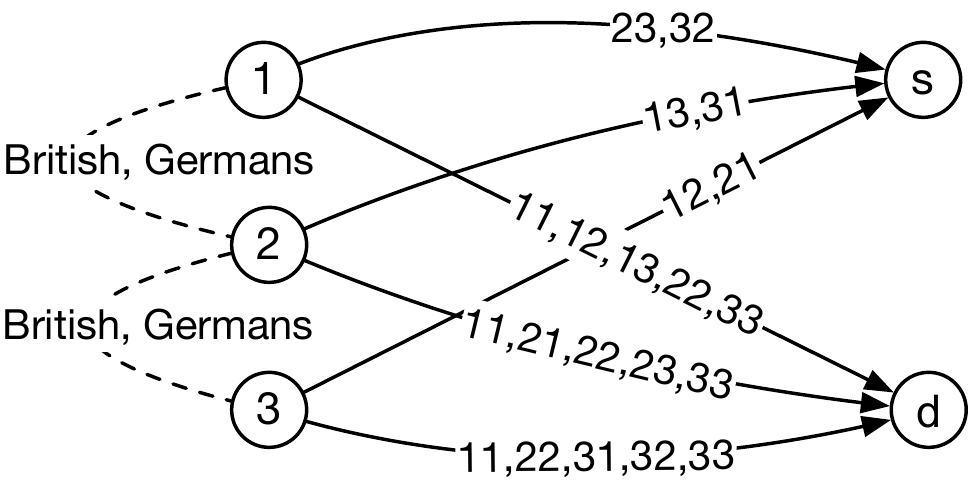}}
\caption{Battle of the Atlantic with imperfect information.}\label{atlantic states figure}
\vspace{-2mm}
\end{center}
\end{figure}

The domain of actions $\Delta$ in this game is  $\{1,2,3\}$.  For British and Germans actions represent the choice of routes that they make for their convoys and U-boats respectively. Russians are passive players in this game. Their action does not affect the outcome of the game. Technically, a complete action profile is a function $\delta$ from set $\{\mbox{British},\mbox{Germans},\mbox{Russians}\}$ into set $\{1,2,3\}$. Since, there are only three players in the Battle of the Atlantic game, it is more convenient to represent function $\delta$ by triple $bgr\in \{1,2,3\}^3$, where $b$ is the action of British, $g$ is the action of Germans, and $c$ is the action of Russians. 

The mechanism $M$ of the Battle of the Atlantic game is captured by the directed edges in Figure~\ref{atlantic states figure} labeled by complete actions profiles.
Since value $r$ in a profile $bgr$ does not effect the outcome, it is omitted on the diagram. For example, directed edge from state $1$ to state $s$ is labeled with 23 and 32. This means that the mechanism $M$ contains triples 
$(1,231,s)$,
$(1,232,s)$,
$(1,233,s)$,
$(1,321,s)$,
$(1,322,s)$, and
$(1,323,s)$.

The definition of a game that we use here is more general than the one used in the original Marc Pauly's semantics of the logic of coalition power. Namely, we assume that the mechanism is a relation, not a function. On one hand, this allows us to talk about nondeterministic games where for each initial state and each complete action profile there might be more than one outcome. On the other hand, this also allows for some combinations of the initial state and the complete action profile there to be no outcome at all. In other words, we do not exclude games in which agents might have an ability in some situations to terminate the game without reaching an outcome. If needed, such games can be excluded and an additional axiom $\neg[C]_\varnothing\bot$ be added to the logical system. The proof of the completeness will remain mostly unchanged.  We also introduce indistinguishability relation on states to capture the imperfect information. We do it in the same way as it has been done in the cited earlier previous works on the logics of coalition power with imperfect information.

\begin{definition}
For any states $w,w'\in W$ and any coalition $C$, let $w\sim_C w'$ if $w\sim_a w'$ for each agent $a\in C$.
\end{definition}
In particular, $w\sim_\varnothing w'$ for any two states of the game.

\begin{lemma}\label{sim C is equivalence relation lemma}
For any coalition $C$, relation $\sim_C$ is an equivalence relation on set $W$.\qed
\end{lemma}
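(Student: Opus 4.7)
The plan is to reduce the three defining properties of an equivalence relation (reflexivity, symmetry, transitivity) on $\sim_C$ to the corresponding properties of each $\sim_a$ for $a\in C$, which are given by item~2 of Definition~\ref{transition system}.

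First I would unfold Definition of $\sim_C$: $w\sim_C w'$ iff $w\sim_a w'$ for every $a\in C$. For reflexivity, I fix $w\in W$ and observe that for every $a\in C$ we have $w\sim_a w$ by reflexivity of $\sim_a$, so $w\sim_C w$. For symmetry, assume $w\sim_C w'$; then for each $a\in C$, $w\sim_a w'$, whence by symmetry of $\sim_a$ we get $w'\sim_a w$, so $w'\sim_C w$. For transitivity, assume $w\sim_C w'$ and $w'\sim_C w''$; then for each $a\in C$ both $w\sim_a w'$ and $w'\sim_a w''$, and transitivity of $\sim_a$ gives $w\sim_a w''$, so $w\sim_C w''$.

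The only mildly non-obvious point is the edge case $C=\varnothing$: the universal quantifier over the empty coalition is vacuously true, so $w\sim_\varnothing w'$ for all $w,w'\in W$, which is trivially an equivalence relation. I would mention this explicitly to match the remark made just before the lemma.

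I do not anticipate any real obstacle; the proof is a routine unfolding of definitions plus the quantifier-level fact that a conjunction (over $a\in C$) of equivalence relations is again an equivalence relation. Given how short and mechanical this is, the author may well omit the proof entirely, which would be consistent with the \texttt{\textbackslash qed} appearing directly on the lemma statement.
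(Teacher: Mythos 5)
Your proof is correct and matches the paper exactly in spirit: the paper omits the proof entirely (hence the \verb|\qed| on the statement), treating it as the routine fact that an intersection of equivalence relations is an equivalence relation, with the empty-coalition case handled vacuously as noted just before the lemma. Nothing to add.
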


By an action profile of a coalition $C$ we mean any function from set $\Delta^C$. For any two functions $f,g$, we write $f=_X g$ if $f(x)=g(x)$ for each $x\in X$. 

Next is the key definition of this article. Its part 5 gives the semantics of  modality $[C]_B$. This part uses state $w'$ to capture the fact that the strategy succeeds in each state indistinguishable by coalition $C$ from the current state $w$. In other words, the coalition $C$ {\em knows} that this strategy will succeed. Except for the addition of coalition $B$ and its action profile $\beta$, this is essentially the same definition as the one used in \cite{aa16jlc,nt17aamas,fhlw17ijcai,nt17tark,nt18ai,nt18aaai,nt18aamas}.


\begin{definition}\label{sat}
For any  game $(W,\{\sim_a\}_{a\in\mathcal{A}},\Delta,M,\pi)$, any state $w\in W$, and any formula $\phi\in \Phi$, let satisfiability relation $w\Vdash\phi$ be defined as follows:
\begin{enumerate}
    \item $w\Vdash p$ if $w\in \pi(p)$, where $p$ is a propositional variable,
    \item $w\Vdash\neg\phi$ if $w\nVdash\phi$,
    \item $w\Vdash\phi\to\psi$ if $w\nVdash\phi$ or $w\Vdash\psi$,
    \item $w\Vdash\K_C\phi$ if $w'\Vdash\phi$ for each $w'\in W$ such that $w\sim_C w'$, 
    \item $w\Vdash[C]_B\phi$ if for any action profile $\beta\in \Delta^B$ of coalition $B$ there is an action profile $\gamma\in \Delta^C$ of coalition $C$ such that for any complete action profile $\delta\in \Delta^\mathcal{A}$ and any states $w',u\in W$ if $\beta=_B\delta$, $\gamma=_C\delta$, $w\sim_C w'$, and $(w',\delta,u)\in M$, then $u\Vdash\phi$.
\end{enumerate}
\end{definition} 
For example, for the game depicted in Figure~\ref{atlantic states figure},
$$
1\Vdash [\mbox{British, Russians}]_{\footnotesize\mbox{Germans}}(\mbox{Convoy is saved}).
$$
Indeed, statement $(1\sim_{\mbox{\small British, Russians}}w')$ is true only for one state $w'\in W$, namely state $1$ itself. Then, for any action  profile $\beta\in\{1,2,3\}^{\{\mbox{\small Germans}\}}$ of the single-member coalition $\{\mbox{Germans}\}$ we can define  action  profile $\gamma\in\{1,2,3\}^{\{\mbox{\small British, Russians}\}}$ as, for example,
$$
\gamma(a)=
\begin{cases}
3, & \mbox{ if $a=\mbox{British}$ and $\beta(\mbox{Germans})=2$},  \\
2, & \mbox{ if $a=\mbox{British}$ and $\beta(\mbox{Germans})=3$},  \\
1, & \mbox{ if $a=\mbox{Russians}$}.
\end{cases}
$$
In other words, if profile $\beta$ assigns Germans route 2, then profile $\gamma$ assigns British route 3 and vice versa. Assignment of an action to Russians in not important. This way, no matter what Germans' action is, the British convoy will avoid both the German U-boat and the Russian mines in the game that starts from state $w'=1$. At the same time, 
$$
1\Vdash \neg [\mbox{British}]_{\footnotesize\mbox{Germans}}(\mbox{Convoy is saved}).
$$
because without Russians the British cannot distinguish states 1, 2, and 3. In other words, $(1\sim_{\mbox{\small British}}w')$ for any state $w'\in \{1,2,3\}$. Thus, for each action  profile $\beta\in\{1,2,3\}^{\{\mbox{\small Germans}\}}$ we need to have a single action profile $\gamma\in\{1,2,3\}^{\{\mbox{\small British, Russians}\}}$ that would bring the convoy to state $s$ from any of the states 1, 2, and 3. Such profile $\gamma$ does not exists because, even if the British know where Germans U-boat will be, there is no single uniform strategy to choose path that would avoid Russian mines from all three indistinguishable states 1, 2, and 3. 

\section{Axioms}\label{axioms section}

In addition to the propositional tautologies in language $\Phi$, our logical system consists of the following axioms:

\begin{enumerate}
    \item Truth: $\K_C\phi\to\phi$,
    \item Distributivity: $\K_C(\phi\to\psi)\to(\K_C\phi\to\K_C\psi)$,
    \item Negative Introspection: $\neg\K_C\phi\to\K_C\neg\K_C\phi$,
    \item Epistemic Monotonicity: $\K_C\phi\to\K_D\phi$, where $C\subseteq D$,
    \item Strategic Introspection: $[C]_B\phi\to \K_C[C]_B\phi$,
    \item Empty Coalition: $\K_\varnothing\phi\to [\varnothing]_\varnothing\phi$,
    \item Cooperation: $[C]_B(\phi\to\psi)\to([D]_{B,C}\phi\to[C,D]_B\psi)$, where sets $B,C$, and $D$ are pairwise disjoint,
    \item Intelligence Monotonicity: $[C]_B\phi\to [C]_{B'}\phi$, where $B\subseteq B'$,
    \item None to Analyze: $[\varnothing]_B\phi\to [\varnothing]_\varnothing\phi$.
\end{enumerate}
Note that in the Cooperation axiom above and often throughout the rest of the article we abbreviate $B\cup C$ as $B,C$. However, we keep writing $B\cup C$ when notation $B,C$ could be confusing.

The Truth, the Distributivity, the Negative Introspection, and the Epistemic Monotonicity axioms are the standard axioms of the epistemic logic of distributed knowledge~\cite{fhmv95}. The Strategic Introspection axiom states that if a coalition $C$ has a ``know-how'' strategy, then it knows that it has such a strategy. A version of this axiom without intelligence was first introduced in \cite{aa16jlc}.  The Empty Coalition axiom says that if statement $\phi$ is satisfied in each state of the model, then the empty coalition has a strategy to achieve it. This axiom first appeared in \cite{nt17tark}. The Cooperation axiom for strategies without intelligence:
$$
[C](\phi\to\psi)\to([D]\phi\to[C,D]\psi),
$$
where sets $C$ and $D$ are disjoint, was introduced in \cite{p01illc,p02}. This is the signature axiom that appears in all subsequent works on logics of coalition power.  The version of this axiom with intelligence is one of the key contributions of the current article. Our version states that if coalition $C$ knows how to achieve $\phi\to\psi$ assuming it has intelligence about actions of coalition $B$ and coalition $D$ knows how to achieve $\phi$ assuming it has intelligence about actions of coalitions $B$ and $C$, then coalitions $C$ and $D$ know how together they can achieve $\psi$ if they have intelligence about actions of coalition $B$. We prove soundness of this axiom in Section~\ref{soundness section}. The remaining two axioms are original to this article. The Intelligence Monotonicity axiom states if coalition $C$ has a strategy based on intelligence about actions of coalition $B$, then coalition $C$ has such strategy based on intelligence about any larger coalition. The other form of monotonicity for modality $[C]_B$, monotonicity on coalition $C$ is also true. It is not listed among our axioms because it is provable in our system, see Lemma~\ref{subscript monotonicity lemma}. The None to Analyze axiom say that if there is none to interpret the intelligence information about coalition $B$, then this intelligence might as well not exist.

We write $\vdash\phi$ if formula $\phi$ is provable from the above axioms using the Modus Ponens, the Epistemic Necessitation, and the  Strategic Necessitation inference rules:
$$
\dfrac{\phi,\;\;\;\phi\to\psi}{\psi}, 
\hspace{10mm}
\dfrac{\phi}{\K_C\phi},
\hspace{10mm}
\dfrac{\phi}{[C]_B\phi}.
$$

We write $X\vdash\phi$ if formula $\phi\in\Phi$ is provable from the theorems of our logical system and an additional set of axioms $X$ using only the Modus Ponens inference rule. Note that if set $X$ is empty, then statement $X\vdash\phi$ is equivalent to $\vdash\phi$. We say that set $X$ is consistent if $X\nvdash\bot$.

The next lemma gives an example of a formal proof in our logical system. This example will be used later in the proof of the completeness. 

\begin{lemma}\label{subscript monotonicity lemma}
$\vdash [C]_B\phi\to [C']_B\phi$, where $C\subseteq C'$.
\end{lemma}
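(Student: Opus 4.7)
The plan is to combine the Cooperation axiom (applied to a trivial implication) with Intelligence Monotonicity. Let $D=C'\setminus C$, so that $C\cup D=C'$; since $[C']_B\phi$ is well-formed we have $B\cap C'=\varnothing$, which gives us that $B$, $C$, and $D$ are pairwise disjoint, exactly the side-condition needed for the Cooperation axiom.

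First I would instantiate the Cooperation axiom with ``$C$''$=D$, ``$D$''$=C$, and with both $\phi$ and $\psi$ set to $\phi$, obtaining
$$
[D]_B(\phi\to\phi)\to\bigl([C]_{B\cup D}\phi\to[C\cup D]_B\phi\bigr).
$$
Next, I would note that $\phi\to\phi$ is a propositional tautology, so Strategic Necessitation yields $\vdash[D]_B(\phi\to\phi)$, and Modus Ponens then gives $\vdash[C]_{B\cup D}\phi\to[C\cup D]_B\phi$.

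Finally, since $B\subseteq B\cup D$, the Intelligence Monotonicity axiom gives $\vdash[C]_B\phi\to[C]_{B\cup D}\phi$. Chaining this with the implication from the previous step via propositional reasoning yields $\vdash[C]_B\phi\to[C\cup D]_B\phi$, which is exactly $\vdash[C]_B\phi\to[C']_B\phi$.

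I do not expect any real obstacle here: once one notices that Cooperation instantiated with the tautology $\phi\to\phi$ allows a piece of a coalition's knowledge to be ``absorbed'' into the coalition itself (at the cost of adding it to the intelligence set), Intelligence Monotonicity is available precisely to pay that cost. The only thing to be careful about is bookkeeping of the disjointness conditions on $B,C,D$ when invoking Cooperation, which is automatic from $C\subseteq C'$ and the well-formedness of $[C']_B\phi$.
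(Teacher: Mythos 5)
Your proof is correct and follows essentially the same route as the paper: apply Cooperation to the tautology $\phi\to\phi$ with the difference coalition $C'\setminus C$ playing the role of the ``helper'' coalition. In fact your version is the more careful one: the paper asserts that $[C'\setminus C]_B(\phi\to\phi)\to([C]_B\phi\to[(C'\setminus C)\cup C]_B\phi)$ is itself an instance of Cooperation, whereas the axiom actually yields $[C]_{B\cup(C'\setminus C)}\phi$ in the middle position, so the Intelligence Monotonicity step $[C]_B\phi\to[C]_{B\cup(C'\setminus C)}\phi$ that you make explicit is genuinely needed to close the argument. Your bookkeeping of the disjointness side-conditions (from $C\subseteq C'$ and the well-formedness of $[C']_B\phi$) is also correct.
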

\begin{proof}
Formula $\phi\to\phi$ is a tautology. Thus, $\vdash [{C'\setminus C}]_B(\phi\to\phi)$ by the Strategic Necessitation inference rule. Note that the following formula: 
$[C'\setminus C]_B(\phi\to\phi)\to ([C]_B\phi\to [{(C'\setminus C)\cup C}]_B\phi)$ 
is an instance of the Cooperation axiom. Thus, $\vdash [C]_B\phi\to [{(C'\setminus C)\cup C}]_B\phi$ by the Modus Ponens inference rule. Note also that $(C'\setminus C)\cup C=C'$ because of the assumption $C\subseteq C'$. Therefore, $\vdash [C]_B\phi\to [C']_B\phi$.
\end{proof}

The following lemma states the well-known Positive Introspection principle for the distributed knowledge.
\begin{lemma}\label{positive introspection lemma}
$\vdash \K_C\phi\to\K_C\K_C\phi$. 
\end{lemma}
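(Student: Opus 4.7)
The plan is to derive Positive Introspection from Negative Introspection in the standard way that the modal axiom scheme $4$ is derived from $K+T+5$. The statement only involves a single coalition $C$, so the coalition subscript remains fixed throughout and the Epistemic Monotonicity axiom plays no role; only Truth, Distributivity, Negative Introspection, the Epistemic Necessitation rule, and propositional reasoning will be used.

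First I would establish the auxiliary implication $\K_C\phi\to\neg\K_C\neg\K_C\phi$. This is obtained by taking the contrapositive of the instance $\K_C\neg\K_C\phi\to\neg\K_C\phi$ of the Truth axiom (applied to the formula $\neg\K_C\phi$). Next, I would instantiate the Negative Introspection axiom with $\neg\K_C\phi$ in place of $\phi$, which yields $\neg\K_C\neg\K_C\phi\to\K_C\neg\K_C\neg\K_C\phi$. Chaining these two implications via propositional reasoning gives
\begin{equation*}
\K_C\phi\to\K_C\neg\K_C\neg\K_C\phi.
\end{equation*}

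Separately, I would take the Negative Introspection axiom $\neg\K_C\phi\to\K_C\neg\K_C\phi$ in its contrapositive form, namely $\neg\K_C\neg\K_C\phi\to\K_C\phi$. Applying the Epistemic Necessitation rule to this tautological consequence of the axioms and then invoking the Distributivity axiom yields $\K_C\neg\K_C\neg\K_C\phi\to\K_C\K_C\phi$. Composing this with the implication from the previous paragraph by propositional reasoning produces the desired $\K_C\phi\to\K_C\K_C\phi$.

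There is no real obstacle here: the derivation is a textbook $S5$ reduction of axiom~$4$ to axiom~$5$, and all of the required ingredients (Truth, Distributivity, Negative Introspection, and Necessitation for $\K_C$) are already available in the system. The only thing to be careful about is keeping the nested negations and the direction of each contrapositive straight, since the triple negation $\neg\K_C\neg\K_C\phi$ appears in both chains and is used as the pivot that links them.
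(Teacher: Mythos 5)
Your proof is correct and follows essentially the same route as the paper's own argument: both derive $\K_C\phi\to\K_C\neg\K_C\neg\K_C\phi$ from the contrapositive of Truth plus Negative Introspection, and both obtain $\K_C\neg\K_C\neg\K_C\phi\to\K_C\K_C\phi$ by applying Epistemic Necessitation and Distributivity to the contrapositive of Negative Introspection. No gaps.
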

\begin{proof}
Formula $\K_C\neg\K_C\phi\to\neg\K_C\phi$ is an instance of the Truth axiom. Thus, $\vdash \K_C\phi\to\neg\K_C\neg\K_C\phi$ by contraposition. Hence, taking into account the following instance of  the Negative Introspection axiom: $\neg\K_C\neg\K_C\phi\to\K_C\neg\K_C\neg\K_C\phi$,
we have 
\begin{equation}\label{pos intro eq 2}
\vdash \K_C\phi\to\K_C\neg\K_C\neg\K_C\phi.
\end{equation}

At the same time, $\neg\K_C\phi\to\K_C\neg\K_C\phi$ is an instance of the Negative Introspection axiom. Thus, $\vdash \neg\K_C\neg\K_C\phi\to \K_C\phi$ by the law of contrapositive in the propositional logic. Hence, by the Necessitation inference rule, 
$\vdash \K_C(\neg\K_C\neg\K_C\phi\to \K_C\phi)$. Thus, by  the Distributivity axiom and the Modus Ponens inference rule, 
$
  \vdash \K_C\neg\K_C\neg\K_C\phi\to \K_C\K_C\phi.
$
 The latter, together with statement~(\ref{pos intro eq 2}), implies the statement of the lemma by propositional reasoning.
\end{proof}

We conclude this section by stating the two standard lemmas about our deduction system. These lemmas will be used later in the proof of the completeness.

\begin{lemma}[deduction]\label{deduction lemma}
If $X,\phi\vdash\psi$, then $X\vdash\phi\to\psi$.
\end{lemma}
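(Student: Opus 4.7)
The plan is to prove the Deduction Lemma by induction on the length of a derivation witnessing $X,\phi\vdash\psi$. The crucial observation enabling this standard argument is the remark immediately preceding the lemma: when deriving from an auxiliary set $X$, \emph{only Modus Ponens} is allowed; the Epistemic and Strategic Necessitation rules are unavailable. This is essential, since necessitation would break the deduction theorem (one cannot in general conclude $\phi\to\K_C\psi$ from $\phi\vdash\K_C\psi$).

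For the base case, $\psi$ is either (i) a theorem of the logical system, (ii) a member of $X$, or (iii) the formula $\phi$ itself. In cases (i) and (ii) we have $X\vdash\psi$, and combining this with the propositional tautology $\psi\to(\phi\to\psi)$ via Modus Ponens yields $X\vdash\phi\to\psi$. In case (iii), $\phi\to\psi$ is just $\phi\to\phi$, which is a propositional tautology and hence provable from $X$.

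For the induction step, suppose $\psi$ was obtained by Modus Ponens from two earlier formulae $\chi$ and $\chi\to\psi$ in the derivation. The induction hypothesis applied to the two shorter derivations yields $X\vdash\phi\to\chi$ and $X\vdash\phi\to(\chi\to\psi)$. Using the propositional tautology
$$
(\phi\to(\chi\to\psi))\to((\phi\to\chi)\to(\phi\to\psi))
$$
and two applications of Modus Ponens, we obtain $X\vdash\phi\to\psi$, completing the induction.

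There is essentially no obstacle here beyond bookkeeping; the only point that requires a moment of care is making sure that the inductive argument never needs to commute a deduction hypothesis past a Necessitation step, and this is guaranteed by the restricted definition of $X\vdash\phi$. Thus the proof is a routine adaptation of the classical Hilbert-style deduction theorem.
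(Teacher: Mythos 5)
Your proof is correct and follows exactly the route the paper intends: the paper simply cites the standard Hilbert-style deduction theorem after noting that derivations from $X$ use only Modus Ponens, and your induction on derivation length (with the tautologies $\psi\to(\phi\to\psi)$ and $(\phi\to(\chi\to\psi))\to((\phi\to\chi)\to(\phi\to\psi))$) is precisely that standard argument written out. Your emphasis on why the absence of the Necessitation rules is essential matches the paper's own justification.
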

\begin{proof}
Since $X,\phi\vdash\psi$ refers to the provability without the use of the Epistemic Necessitation and the  Strategic Necessitation inference rules, the standard proof of the deduction lemma for propositional logic~\cite[Proposition 1.9]{m09} applies to our system as well.
\end{proof}

\begin{lemma}[Lindenbaum]\label{Lindenbaum's lemma}
Any consistent set of formulae can be extended to a maximal consistent set of formulae.
\end{lemma}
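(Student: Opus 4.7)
The plan is to use the standard Lindenbaum construction, which adapts to our setting because our notion of provability $X \vdash \phi$ is finitary: a derivation from $X$ using Modus Ponens is a finite sequence, so any inconsistency $X \vdash \bot$ must already be witnessed by a finite subset of $X$. This finitary character is the only property of $\vdash$ we really need.

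First, I would order the collection $\mathcal{F} = \{Y \subseteq \Phi \mid X \subseteq Y \text{ and } Y \text{ is consistent}\}$ by set inclusion. Then I would verify that $\mathcal{F}$ is closed under unions of chains: if $\{Y_i\}_{i \in I}$ is a chain in $\mathcal{F}$ and $\bigcup_i Y_i \vdash \bot$, then by finitariness there is a finite subset $\{\psi_1,\dots,\psi_n\} \subseteq \bigcup_i Y_i$ with $\{\psi_1,\dots,\psi_n\} \vdash \bot$; since the chain is totally ordered, all these formulae already sit in a single $Y_{i_0}$, contradicting its consistency. Applying Zorn's lemma then yields a maximal element $\hat X$ of $\mathcal{F}$.

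Next I would verify that $\hat X$ is maximal consistent in the strong sense, namely that for every $\phi \in \Phi$ either $\phi \in \hat X$ or $\neg\phi \in \hat X$. Suppose neither holds. Then both $\hat X \cup \{\phi\}$ and $\hat X \cup \{\neg\phi\}$ properly extend $\hat X$, so by its maximality in $\mathcal{F}$ both must be inconsistent, giving $\hat X, \phi \vdash \bot$ and $\hat X, \neg\phi \vdash \bot$. By the Deduction Lemma (Lemma~\ref{deduction lemma}) we obtain $\hat X \vdash \neg\phi$ and $\hat X \vdash \neg\neg\phi$, so $\hat X \vdash \bot$ by propositional reasoning, contradicting $\hat X \in \mathcal{F}$.

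There is no real obstacle here; the only care needed is to use Zorn's lemma rather than a countable enumeration, since the paper does not assume the set of propositional variables or the set of agents $\mathcal{A}$ is countable (coalitions are finite, but $\Phi$ may still be of arbitrary cardinality). Everything else is a direct appeal to the finitary nature of $\vdash$ and to Lemma~\ref{deduction lemma}, mirroring the textbook proof for propositional logic.
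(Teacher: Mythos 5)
Your proof is correct and is essentially the paper's own argument: the paper simply cites the standard Lindenbaum construction, and you have written it out in full, correctly identifying that the only properties needed are the finitary character of $\vdash$ (since $X\vdash\phi$ uses only Modus Ponens, hence finite derivations) and the Deduction Lemma. Your substitution of Zorn's lemma for the usual countable enumeration is a harmless and indeed slightly more careful variant, since the paper does not assume $\Phi$ is countable.
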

\begin{proof}
The standard proof of Lindenbaum's lemma~\cite[Proposition 2.14]{m09} applies here too.
\end{proof}

\section{Soundness}\label{soundness section}

In this section we prove the soundness of the axioms of our logical system with respect to the semantics given in Section~\ref{syntax and semantics section}. 

\begin{theorem}[soundness]
If $\vdash \phi$, then $\Vdash \phi$ for each state $w$ of each game.
\end{theorem}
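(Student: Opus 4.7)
The plan is to proceed by induction on the length of a derivation of $\phi$, verifying semantic validity of each axiom and each inference rule. Modus Ponens preserves validity in any state by the definition of $\Vdash$ for implication. The Epistemic Necessitation and Strategic Necessitation rules are both vacuously sound: if $\phi$ holds in every state of every game, then $\K_C\phi$ holds because every $\sim_C$-successor is a $\phi$-state, and $[C]_B\phi$ holds because any choice of $\gamma\in\Delta^C$ (which exists, since $\Delta$ is nonempty) forces $u\Vdash\phi$ unconditionally. The four epistemic axioms (Truth, Distributivity, Negative Introspection, Epistemic Monotonicity) are the standard S5 axioms for distributed knowledge; their validity follows routinely from Lemma~\ref{sim C is equivalence relation lemma} together with the observation that $\sim_D$ refines $\sim_C$ whenever $C\subseteq D$.

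For the remaining modal axioms, I would handle the routine ones first. For Strategic Introspection, if $w\Vdash[C]_B\phi$ and $w\sim_C w^*$, then for each $\beta\in\Delta^B$ I reuse the witness $\gamma\in\Delta^C$ guaranteed at $w$: any $w'$ with $w^*\sim_C w'$ also satisfies $w\sim_C w'$ by transitivity of $\sim_C$, so all the semantic conditions transfer from $w$ to $w^*$. For Empty Coalition, $w\Vdash\K_\varnothing\phi$ means every state in the game satisfies $\phi$ (since $w\sim_\varnothing w'$ always holds), whence $[\varnothing]_\varnothing\phi$ is trivially true at $w$. For Intelligence Monotonicity, given $\beta'\in\Delta^{B'}$ I let $\beta\in\Delta^B$ be the restriction of $\beta'$ to $B$ and reuse the witness provided by $[C]_B\phi$; since $B\subseteq B'$, the constraint $\beta'=_{B'}\delta$ entails $\beta=_B\delta$. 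For None to Analyze, given any $\delta\in\Delta^\mathcal{A}$ arising in an attempted verification of $[\varnothing]_\varnothing\phi$, I set $\beta$ to be the restriction of $\delta$ to $B$ and invoke the hypothesis $[\varnothing]_B\phi$; the empty coalition provides no nontrivial $\gamma$, so no further work is needed.

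The main obstacle is the Cooperation axiom $[C]_B(\phi\to\psi)\to([D]_{B,C}\phi\to[C,D]_B\psi)$ with $B,C,D$ pairwise disjoint. The plan is to assume $w\Vdash[C]_B(\phi\to\psi)$ and $w\Vdash[D]_{B,C}\phi$ and, for each $\beta\in\Delta^B$, construct a joint witness in $\Delta^{C\cup D}$. First I extract $\gamma_C\in\Delta^C$ witnessing the first hypothesis for $\beta$; then I apply the second hypothesis to the combined profile $\beta\cup\gamma_C\in\Delta^{B\cup C}$, well-defined because $B\cap C=\varnothing$, to extract $\gamma_D\in\Delta^D$; the joint witness is $\gamma_C\cup\gamma_D$, well-defined because $C\cap D=\varnothing$. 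Given any $\delta$, $w'$, $u$ with $\beta=_B\delta$, $(\gamma_C\cup\gamma_D)=_{C\cup D}\delta$, $w\sim_{C\cup D}w'$, and $(w',\delta,u)\in M$, the disjointness of $B,C,D$ lets me split the equality constraints cleanly: $\gamma_C=_C\delta$, $\beta=_B\delta$, and $w\sim_C w'$ yield $u\Vdash\phi\to\psi$ via the first witness, while $(\beta\cup\gamma_C)=_{B\cup C}\delta$, $\gamma_D=_D\delta$, and $w\sim_D w'$ yield $u\Vdash\phi$ via the second. Modus Ponens at $u$ then delivers $u\Vdash\psi$, as required, and the pairwise disjointness of $B,C,D$ is precisely what keeps the glued action profiles well-defined and the equality constraints non-overlapping.
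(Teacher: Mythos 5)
Your proposal is correct and follows essentially the same route as the paper: standard S5 arguments for the epistemic axioms, witness-reuse via the equivalence properties of $\sim_C$ for Strategic Introspection and the monotonicity axioms, and for Cooperation the same two-stage construction (extract $\gamma_C$ for $\beta$, feed the glued profile $\beta\cup\gamma_C$ into the second hypothesis to get $\gamma_D$, and combine). The disjointness bookkeeping and the splitting of the equality constraints match the paper's argument step for step.
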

As usual, the soundness of the Truth, the Distributivity, the Negative Introspection, and the Monotonicity axiom follows from the assumption that $\sim_a$ is an equivalence relation~\cite{fhmv95}. Below we prove the soundness of each of the remaining axioms as a separate lemma.

\begin{lemma}
If $w\Vdash [C]_B\phi$, then $w\Vdash\K_C[C]_B\phi$.
\end{lemma}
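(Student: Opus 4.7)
The plan is to unfold the two semantic clauses and exploit that $\sim_C$ is an equivalence relation (Lemma~\ref{sim C is equivalence relation lemma}).

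First I would fix an arbitrary state $w''$ with $w \sim_C w''$ and show $w'' \Vdash [C]_B\phi$; by clause~4 of Definition~\ref{sat}, this suffices to conclude $w \Vdash \K_C [C]_B\phi$. To establish $w'' \Vdash [C]_B\phi$, I take an arbitrary action profile $\beta \in \Delta^B$. Since $w \Vdash [C]_B\phi$, clause~5 of Definition~\ref{sat} supplies an action profile $\gamma \in \Delta^C$ witnessing the statement at $w$ for this same $\beta$. I would reuse exactly this $\gamma$ as the witness for $w''$.

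Next I would verify that $\gamma$ works at $w''$. Consider any complete action profile $\delta \in \Delta^\mathcal{A}$ and any states $w', u \in W$ satisfying $\beta =_B \delta$, $\gamma =_C \delta$, $w'' \sim_C w'$, and $(w', \delta, u) \in M$. The key move is to observe that $w \sim_C w''$ and $w'' \sim_C w'$ together give $w \sim_C w'$ by the transitivity part of Lemma~\ref{sim C is equivalence relation lemma}. Hence the tuple $(\delta, w', u)$ satisfies all four hypotheses of the clause that witnesses $w \Vdash [C]_B\phi$ for the profile $\beta$, so $u \Vdash \phi$, as required.

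There is no real obstacle here, since the argument is a direct unpacking of the definitions; the only thing one must notice is that the same witness $\gamma$ produced at $w$ can be used at $w''$, and this works precisely because the indistinguishability class $\{w' \mid w \sim_C w'\}$ already encompasses every state $w'$ that must be examined when evaluating $[C]_B\phi$ at any $w'' \sim_C w$.
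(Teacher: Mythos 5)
Your proof is correct and follows essentially the same route as the paper's: both reuse the witness profile $\gamma$ obtained from $w\Vdash[C]_B\phi$ at the indistinguishable state, and both rely on the transitivity of $\sim_C$ from Lemma~\ref{sim C is equivalence relation lemma} to transfer the verification back to $w$. No gaps.
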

\begin{proof}
Consider any state $w'\in W$ such that $w\sim_C w'$. By Definition~\ref{sat}, it suffices to show that $w'\Vdash[C]_B\phi$. 
Indeed, consider any action profile $\beta\in \Delta^B$ of coalition $B$. By the same Definition~\ref{sat}, it suffices to show that there is an action profile $\gamma\in \Delta^C$ of coalition $C$ such that for any complete action profile $\delta\in \Delta^\mathcal{A}$ and any states $w'',u\in W$ if $\beta=_B\delta$, $\gamma=_C\delta$, $w'\sim_C w''$, and $(w'',\delta,u)\in M$, then $u\Vdash\phi$.

Since $w\sim_C w'$, by Lemma~\ref{sim C is equivalence relation lemma}, it suffices to show that there is an action profile $\gamma\in \Delta^C$ of coalition $C$ such that for any complete action profile $\delta\in \Delta^\mathcal{A}$ and any states $w'',u\in W$ if $\beta=_B\delta$, $\gamma=_C\delta$, $w\sim_C w''$, and $(w'',\delta,u)\in M$, then $u\Vdash\phi$. The last statement is true by Definition~\ref{sat} and the assumption  $w\Vdash[C]_B\phi$.
\end{proof}

\begin{lemma}
If $w\Vdash\K_\varnothing\phi$, then $w\Vdash[\varnothing]_\varnothing\phi$.
\end{lemma}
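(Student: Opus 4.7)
The plan is to unpack Definition~\ref{sat} for both sides and observe that the empty coalition makes almost all the conditions trivial. First I would note that $\sim_\varnothing$ is the universal relation on $W$: by Definition~\ref{transition system} the condition $w\sim_\varnothing w'$ amounts to the vacuous requirement that $w\sim_a w'$ for every $a\in\varnothing$, which is automatically satisfied. Consequently, the assumption $w\Vdash\K_\varnothing\phi$ unfolds, via clause~4 of Definition~\ref{sat}, to the very strong statement that $w'\Vdash\phi$ for \emph{every} $w'\in W$.

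Next I would expand $w\Vdash[\varnothing]_\varnothing\phi$ using clause~5 of Definition~\ref{sat}. Since $B=\varnothing$ and $C=\varnothing$, there is a unique action profile $\beta\in\Delta^\varnothing$ and a unique $\gamma\in\Delta^\varnothing$, namely the empty function; the conditions $\beta=_\varnothing\delta$ and $\gamma=_\varnothing\delta$ are vacuous for every $\delta\in\Delta^\mathcal{A}$, and $w\sim_\varnothing w'$ is vacuous as observed above. Thus the whole clause reduces to showing that for every $\delta\in\Delta^\mathcal{A}$ and every $w',u\in W$ with $(w',\delta,u)\in M$, we have $u\Vdash\phi$.

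To finish, I would simply apply the strong consequence of $w\Vdash\K_\varnothing\phi$ drawn in the first step: since $u\in W$, we immediately get $u\Vdash\phi$, discharging the remaining obligation. There is no real obstacle here; the only thing to be careful about is the quiet use of the fact that $\sim_\varnothing$ is universal (so that $\K_\varnothing\phi$ means global truth of $\phi$) and the fact that $\Delta^\varnothing$ is a singleton, so that existential and universal quantification over action profiles of the empty coalition are trivially witnessed by the empty function. The argument is essentially a direct verification from the definitions, with no need for additional lemmas beyond Definition~\ref{sat} and Definition~\ref{transition system}.
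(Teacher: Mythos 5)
Your proposal is correct and follows essentially the same route as the paper's proof: both unpack clause~5 of Definition~\ref{sat}, observe that all conditions involving the empty coalition (the profiles in $\Delta^\varnothing$, the constraints $\beta=_\varnothing\delta$, $\gamma=_\varnothing\delta$, and $w\sim_\varnothing w'$) are vacuous, and then discharge the remaining obligation $u\Vdash\phi$ for every $u\in W$ from the fact that $\K_\varnothing\phi$ asserts global truth of $\phi$. No gap to report.
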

\begin{proof}
Let $\beta\in \Delta^\varnothing$ be an action profile of an empty coalition\footnote{Such action profile is unique, but this is not important for our proof.}. By Definition~\ref{sat}, it suffices to show that there is an action profile $\gamma\in \Delta^\varnothing$ of the empty coalition such that for any complete action profile $\delta\in \Delta^\mathcal{A}$ and all states $w',u\in W$ if $\beta=_\varnothing\delta$, $\gamma=_\varnothing\delta$, $w\sim_\varnothing w'$, and $(w',\delta,u)\in M$, then $u\Vdash\phi$. Indeed, let $\gamma=\beta$. Thus, it suffices to prove that  $u\Vdash\phi$ for each state $u\in W$. The last statement follows from the assumption $w\Vdash\K_\varnothing\phi$ by Definition~\ref{sat}.
\end{proof}

\begin{lemma}
If $w\Vdash[C]_B(\phi\to\psi)$, $w\Vdash[D]_{B,C}\phi$, and sets $B$, $C$, and $D$ are pairwise disjoint, then $w\Vdash[C,D]_{B}\psi$.
\end{lemma}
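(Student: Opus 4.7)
The plan is to unfold Definition~\ref{sat} for $[C,D]_B\psi$, pick an arbitrary $\beta\in\Delta^B$, and then to build the required witness action profile for coalition $C\cup D$ by stitching together two witnesses obtained from the two hypotheses. The disjointness of $B$, $C$, $D$ is exactly what makes this stitching unambiguous.

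Concretely, given $\beta\in\Delta^B$, I would first apply the hypothesis $w\Vdash[C]_B(\phi\to\psi)$ to $\beta$ to obtain a witness $\gamma\in\Delta^C$ with the property that whenever a complete profile $\delta$ extends both $\beta$ and $\gamma$ (on $B$ and $C$ respectively), $w\sim_C w'$, and $(w',\delta,u)\in M$, then $u\Vdash\phi\to\psi$. Next, let $\beta'\in\Delta^{B\cup C}$ be the profile defined by $\beta'\!\trun_B=\beta$ and $\beta'\!\trun_C=\gamma$; this is well-defined because $B\cap C=\varnothing$. Applying the hypothesis $w\Vdash[D]_{B,C}\phi$ to $\beta'$ yields a witness $\mu\in\Delta^D$ such that whenever a complete $\delta$ extends $\beta$, $\gamma$, and $\mu$ on the respective sets, $w\sim_D w'$, and $(w',\delta,u)\in M$, then $u\Vdash\phi$.

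Now I would propose $\eta\in\Delta^{C\cup D}$ defined by $\eta\!\trun_C=\gamma$ and $\eta\!\trun_D=\mu$ as the witness for $[C,D]_B\psi$; this is well-defined because $C\cap D=\varnothing$. To verify it works, take any complete $\delta$ and states $w',u$ with $\beta=_B\delta$, $\eta=_{C\cup D}\delta$, $w\sim_{C\cup D}w'$, and $(w',\delta,u)\in M$. Then $\gamma=_C\delta$ and $\mu=_D\delta$, and by Lemma~\ref{sim C is equivalence relation lemma} (or straight from Definition~4) $w\sim_C w'$ and $w\sim_D w'$. Hence both witness conditions apply to $(w',\delta,u)$, giving $u\Vdash\phi\to\psi$ and $u\Vdash\phi$; modus ponens at the semantic level yields $u\Vdash\psi$.

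The only real obstacle is bookkeeping: one has to be careful that the profile $\beta'$ fed to the second hypothesis actually uses the same $\gamma$ that the first hypothesis produced, so that when the final complete profile $\delta$ agrees with $\eta$ on $C$, it simultaneously satisfies $\gamma=_C\delta$ (needed for the $[C]_B$ witness) and $\beta'=_{B\cup C}\delta$ (needed for the $[D]_{B,C}$ witness). Disjointness of $B$, $C$, $D$ prevents any clash between the constraints $\beta=_B\delta$, $\gamma=_C\delta$, and $\mu=_D\delta$, and also makes the definitions of $\beta'$ and $\eta$ as unions well-formed. No calculations beyond these index-tracking checks are needed.
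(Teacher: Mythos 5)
Your proof is correct and takes essentially the same route as the paper's: obtain the witness $\gamma$ for $C$ from the first hypothesis, extend $\beta$ by $\gamma$ to a profile on $B\cup C$, feed that to the second hypothesis to obtain the witness for $D$, and glue the two witnesses into a single profile on $C\cup D$, with pairwise disjointness guaranteeing that both gluings are well-defined. The only differences are notational (the paper's $\gamma_1$, $\gamma_2$, $\beta_1$ are your $\gamma$, $\mu$, $\beta'$), and your closing remark about $\beta'=_{B\cup C}\delta$ is exactly the bookkeeping step the paper also carries out.
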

\begin{proof}
Consider any action profile $\beta\in\Delta^B$ of coalition $B$. By Definition~\ref{sat}, it suffices to show that  there is an action profile $\gamma\in \Delta^{C\cup D}$ of coalition $C\cup D$ such that for any complete action profile $\delta\in \Delta^\mathcal{A}$ and any states $w',u\in W$ if $\beta=_B\delta$, $\gamma=_{C,D}\delta$, $w\sim_{C,D} w'$, and $(w',\delta,u)\in M$, then $u\Vdash\psi$.

Assumption $w\Vdash[C]_B(\phi\to\psi)$, by Definition~\ref{sat}, implies that there is an action profile $\gamma_1\in \Delta^{C}$ of coalition $C$ such that for any complete action profile $\delta\in \Delta^\mathcal{A}$ and any states $w',u\in W$ if $\beta=_B\delta$, $\gamma_1=_{C}\delta$, $w\sim_{C} w'$, and $(w',\delta,u)\in M$, then $u\Vdash\phi\to\psi$.

Define action profile $\beta_1\in\Delta^{B\cup C}$ of coalition $B\cup C$ as follows:
$$
\beta_1(a)=
\begin{cases}
\beta(a), & \mbox{if } a\in B,\\
\gamma_1(a), & \mbox{if } a\in C. 
\end{cases}
$$
Action profile $\beta_1$ is well-defined because sets $B$ and $C$ are disjoint.

Assumption $w\Vdash[D]_{B,C}\phi$, by Definition~\ref{sat}, implies that there is an action profile $\gamma_2\in \Delta^D$ of coalition $D$ such that for any complete action profile $\delta\in \Delta^\mathcal{A}$ and any states $w',u\in W$ if $\beta_1=_{B,C}\delta$, $\gamma_2=_{D}\delta$, $w\sim_D w'$, and $(w',\delta,u)\in M$, then $u\Vdash\phi$.

Define action profile $\gamma\in\Delta^{C\cup D}$ of coalition $C\cup D$ as follows:
$$
\gamma(a)=
\begin{cases}
\gamma_1(a), & \mbox{if } a\in C,\\
\gamma_2(a), & \mbox{if } a\in D. 
\end{cases}
$$
Action profile $\gamma$ is well-defined because sets $C$ and $D$ are disjoint.

Consider any complete action profile $\delta\in \Delta^\mathcal{A}$ and any states $w',u\in W$ such that $\beta=_B\delta$, $\gamma=_{C\cup D}\delta$, $w\sim_{C\cup D} w'$, and $(w',\delta,u)\in M$. Recall from the first paragraph of this proof that it suffices to show that $u\Vdash\phi$. Note that $\beta=_B\delta$, $\gamma_1=_C\gamma=_C\delta$, $w\sim_C w'$, and $(w',\delta,u)\in M$. Thus, $u\Vdash\phi\to\psi$ by the choice of the action profile $\gamma_1$. Similarly, $\beta_1=_B\beta=_B\delta$ and $\beta_1=_C\gamma_1=_C\gamma=_C\delta$. Hence, $\beta_1=_{B\cup C}\delta$. Also, $\gamma_2=_D\gamma=_D\delta$, $w\sim_C w'$, and $(w',\delta,u)\in M$. Thus, $u\Vdash\phi$ by the choice of the action profile $\gamma_2$. Therefore, $u\Vdash\psi$ by Definition~\ref{sat} because $u\Vdash\phi\to\psi$ and $u\Vdash\phi$.
\end{proof}

\begin{lemma}
If $w\Vdash[C]_B\phi$, $B\subseteq B'$, and sets $B'$ and $C$ are disjoint, then $w\Vdash[C]_{B'}\phi$.
\end{lemma}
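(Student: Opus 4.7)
The plan is to unfold the semantics of both modalities via Definition~\ref{sat} and use the restriction of the given intelligence profile to the smaller coalition~$B$ as a bridge between the two statements.

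First I would fix an arbitrary action profile $\beta'\in\Delta^{B'}$ and, in view of Definition~\ref{sat}(5), reduce the goal $w\Vdash[C]_{B'}\phi$ to producing a witness action profile $\gamma\in\Delta^C$ with the appropriate property against $\beta'$. The natural candidate is obtained from the hypothesis $w\Vdash[C]_B\phi$ applied to the restricted profile $\beta\in\Delta^B$ defined by $\beta(a)=\beta'(a)$ for every $a\in B$; this is legitimate because $B\subseteq B'$. Definition~\ref{sat}(5) applied to $w\Vdash[C]_B\phi$ then hands us an action profile $\gamma\in\Delta^C$ that succeeds against~$\beta$, and I would use this very $\gamma$ as the witness for $\beta'$.

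Next I would verify the success of $\gamma$ against $\beta'$. Take any complete action profile $\delta\in\Delta^{\mathcal{A}}$ and any states $w',u\in W$ with $\beta'=_{B'}\delta$, $\gamma=_C\delta$, $w\sim_C w'$, and $(w',\delta,u)\in M$. Since $B\subseteq B'$, the equality $\beta'=_{B'}\delta$ restricts to $\beta=_B\delta$ by construction of $\beta$. The remaining conditions $\gamma=_C\delta$, $w\sim_C w'$, and $(w',\delta,u)\in M$ transfer verbatim. Hence the conditions for the witness $\gamma$ coming from $w\Vdash[C]_B\phi$ are met, giving $u\Vdash\phi$, as required.

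I do not anticipate any real obstacle here: the disjointness hypotheses on $B,B',C$ serve only to guarantee that the formulae $[C]_B\phi$ and $[C]_{B'}\phi$ are well-formed according to Definition~\ref{Phi}(4), and the whole argument is a straightforward restriction of intelligence from the larger set $B'$ to the smaller set $B$. The only minor bookkeeping point is to be careful that $\beta$ is well-defined as an element of $\Delta^B$, which is immediate from $B\subseteq B'$.
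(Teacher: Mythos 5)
Your proposal is correct and matches the paper's own argument essentially verbatim: both restrict $\beta'$ to $\beta\in\Delta^B$ using $B\subseteq B'$, obtain the witness $\gamma$ from $w\Vdash[C]_B\phi$, and reuse it for $\beta'$ since the success condition against $\beta'$ implies the success condition against $\beta$. No differences worth noting.
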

\begin{proof}
Consider any action profile $\beta'\in\Delta^{B'}$ of coalition $B'$. By Definition~\ref{sat}, it suffices to show that there is an action profile $\gamma\in \Delta^C$ of coalition $C$ such that for any complete action profile $\delta\in \Delta^\mathcal{A}$ and any states $w',u\in W$ if $\beta'=_B\delta$, $\gamma=_C\delta$, $w\sim_C w'$, and $(w',\delta,u)\in M$, then $u\Vdash\phi$.

Define action profile $\beta\in\Delta^B$ of coalition $B$ to be such that $\beta(a)=\beta'(a)$ for each agent $a\in B$. Action profile $\beta$ is well-defined due to the assumption $B\subseteq B'$ of the lemma. By Definition~\ref{sat}, assumption $w\Vdash[C]_B\phi$ implies that there is an action profile $\gamma\in \Delta^C$ of coalition $C$ such that for any complete action profile $\delta\in \Delta^\mathcal{A}$ and any states $w',u\in W$ if $\beta=_B\delta$, $\gamma=_C\delta$, $w\sim_C w'$, and $(w',\delta,u)\in M$, then $u\Vdash\phi$. Note that $\beta=_B\beta'$ by the choice of action profile $\beta$.  Therefore, there is an action profile $\gamma\in \Delta^C$ of coalition $C$ such that for any complete action profile $\delta\in \Delta^\mathcal{A}$ and any states $w',u\in W$ if $\beta'=_B\delta$, $\gamma=_C\delta$, $w\sim_C w'$, and $(w',\delta,u)\in M$, then $u\Vdash\phi$.
\end{proof}

\begin{lemma}
If $w\Vdash[\varnothing]_B\phi$, then $w\Vdash[\varnothing]_\varnothing\phi$.
\end{lemma}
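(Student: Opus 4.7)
The plan is to unfold Definition~\ref{sat} on both sides. To prove $w\Vdash[\varnothing]_\varnothing\phi$, I need to exhibit, for the unique empty action profile $\beta\in\Delta^\varnothing$, an action profile $\gamma\in\Delta^\varnothing$ (again the unique empty one) such that whenever $(w',\delta,u)\in M$ with $w\sim_\varnothing w'$ (which always holds), we have $u\Vdash\phi$. So the statement reduces to: for every complete action profile $\delta\in\Delta^\mathcal{A}$ and all $w',u\in W$, if $(w',\delta,u)\in M$, then $u\Vdash\phi$.

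To get this, I would fix arbitrary $\delta\in\Delta^\mathcal{A}$ and $w',u\in W$ with $(w',\delta,u)\in M$, and define an auxiliary action profile $\beta\in\Delta^B$ of $B$ by restricting $\delta$ to $B$, i.e.\ $\beta(a)=\delta(a)$ for each $a\in B$. Then $\beta=_B\delta$ by construction. I would then apply the hypothesis $w\Vdash[\varnothing]_B\phi$ to this particular $\beta$: by Definition~\ref{sat}, it yields the unique empty $\gamma\in\Delta^\varnothing$ (for which $\gamma=_\varnothing\delta$ holds trivially) such that, combined with $w\sim_\varnothing w'$ (trivial) and $(w',\delta,u)\in M$, we conclude $u\Vdash\phi$.

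There is no real obstacle here; the proof is a straightforward bookkeeping exercise showing that the intelligence parameter is irrelevant when the acting coalition is empty, because we can always supply, from an arbitrary $\delta$, a matching intelligence profile $\beta$ by restriction. The only point to be careful about is noting explicitly that $w\sim_\varnothing w'$ holds for every $w'$ (as remarked right after the definition of $\sim_C$) and that the empty action profile is unique, so the existential requirements on $\gamma$ become vacuous.
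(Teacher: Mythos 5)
Your proof is correct and follows essentially the same route as the paper's: both unfold Definition~\ref{sat}, observe that the conditions $\gamma=_\varnothing\delta$ and $w\sim_\varnothing w'$ are vacuous, and use the fact that any complete action profile $\delta$ restricts to a profile $\beta\in\Delta^B$ with $\beta=_B\delta$ so that the hypothesis applies. The only cosmetic difference is that you argue backward from the goal while the paper argues forward from the hypothesis; the content is identical.
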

\begin{proof}
By Definition~\ref{sat}, assumption $w\Vdash[\varnothing]_B\phi$ implies that for any action profile $\beta\in \Delta^B$ of coalition $B$ there is an action profile $\gamma\in \Delta^\varnothing$ of the empty coalition such that for any complete action profile $\delta\in \Delta^\mathcal{A}$ and any states $w',u\in W$ if $\beta=_B\delta$, $\gamma=_\varnothing\delta$, $w\sim_\varnothing w'$, and $(w',\delta,u)\in M$, then $u\Vdash\phi$.

Thus, for any action profile $\beta\in \Delta^B$ of coalition $B$, any complete action profile $\delta\in \Delta^\mathcal{A}$ and any states $w',u\in W$ if $\beta=_B\delta$ and $(w',\delta,u)\in M$, then $u\Vdash\phi$. Hence, for any complete action profile $\delta\in \Delta^\mathcal{A}$ and any states $w',u\in W$ if $(w',\delta,u)\in M$, then $u\Vdash\phi$.

Then, for any action profile $\beta\in \Delta^\varnothing$ of the empty coalition there is an action profile $\gamma\in \Delta^\varnothing$ of the empty coalition such that for any complete action profile $\delta\in \Delta^\mathcal{A}$ and any states $w',u\in W$ if $\beta=_\varnothing\delta$, $\gamma=_\varnothing\delta$, $w\sim_\varnothing w'$, and $(w',\delta,u)\in M$, then $u\Vdash\phi$.

Therefore,$w\Vdash[\varnothing]_\varnothing\phi$ by Definition~\ref{sat}.
\end{proof}

\section{Completeness}\label{completeness section}

In this section we prove the completeness of our logical system. We start this proof by fixing a maximal consistent set of formulae $X_0$ and defining the canonical game  $G(X_0)=(W,\{\sim_a\}_{a\in\mathcal{A}},\Delta,M,\pi)$. 

There are two major challenges that we need to overcome while defining the canonical model. The first of them is well-known complication related to the presence of the distributed knowledge modality in our logical system. The second is a unique challenge of specific to strategies with intelligence. 
To understand the first challenge, recall that in case of individual knowledge states are usually defined as maximal consistent sets. Two such sets are $\sim_a$-equivalent if the sets contain the same $\K_a$ formulae. Unfortunately, this construction can not be easily adapted to distributed knowledge because if two sets share $\K_a$ and $\K_b$ formulae, then they not necessarily share $\K_{a,b}$ formulae. To overcome this challenge we use ``tree'' construction in which each state is a node of a labeled tree. Nodes of the tree are labeled with maximal consistent sets and edges are labeled with coalitions. This construction has been used in logics of know-how with distributed knowledge before~\cite{nt17aamas,nt17tark,nt18ai,nt18aaai,nt18aamas}.

To understand the second challenge, let us first recall the way the canonical game is usually constructed for the logics of the coalition power. The commonly used construction defines the domain of actions to be the set of all formulae. Informally, it means that each agent ``votes'' for a formula that the agents wants to be true in the next state. Of course, not requests of agents are granted. The canonical game mechanism specifies which requests are granted and which are ignored. There also are canonical game constructions in which a voting ballot in addition to a formula must also contain some additional information that acts as a ``key'' verifying that the voting agents has certain information~\cite{nt18ai,nt18aamas}. So, it is natural to assume that in case of formula $[C]_B\phi$, coalition $C$ should vote for formula $\phi$ and provide vote of coalition $B$ as a key. This approach, however, turns out to be problematic. Indeed, in order to satisfy some other formula, say $[B]_D\psi$, vote of coalition $B$ would need to include vote of coalition $D$ as a key. Thus, it appears, that vote of $C$ would need to include vote of $D$ as well. The situation is further complicated by mutual recursion when one attempts to satisfy formulae $[C]_B\phi$ and $[B]_C\chi$ simultaneously. The solution that we propose in this article avoids this recursion. It turns out that it is not necessary for the key to contain the complete intelligence information. Namely, we assume that each agent votes for a formula and signs her vote with a random integer key. To satisfy formula $[C]_B\phi$, the mechanism will guarantee that {\em if all members of coalition $C$ vote for $\phi$ and sign with integer keys that are larger than keys of all members in coalition $B$, then $\phi$ will be true in the next state}. This idea is formalized later in Definition~\ref{canonical M}.  

\begin{definition}\label{canonical state}
Sequence $X_0,C_1,X_1,C_2,\dots,C_n,X_n$ is a state of the canonical game if
\begin{enumerate}
    \item $n\ge 0$,
    \item $X_1,\dots,X_n$ are maximal consistent sets of formulae,
    \item $C_1,\dots,C_n$ are coalitions of agents,
    \item $\{\phi\in\Phi\;|\; \K_{C_k}\phi\in X_{k-1}\}\subseteq X_k$ for each integer $k$ such that $1\le k \le n$.
\end{enumerate}
\end{definition}

We say that sequence $w=X_0,C_1,X_1,\dots,C_{n-1},X_{n-1}$ and sequence $u=X_0,C_1,X_1,\dots,C_n,X_n$ are {\em adjacent}. The adjacency relation defines an {\em undirected} labeled graph whose nodes are elements of set $W$ and whose edges are specified by the adjacency relation. The node $u$ by set $X_n$ and edge $(w,u)$ by each agent in set $C_n$, see Figure~\ref{tree figure}. Note that this graph has no cycles and thus is a tree. 
\begin{figure}[ht]
\begin{center}
\scalebox{0.7}{\includegraphics{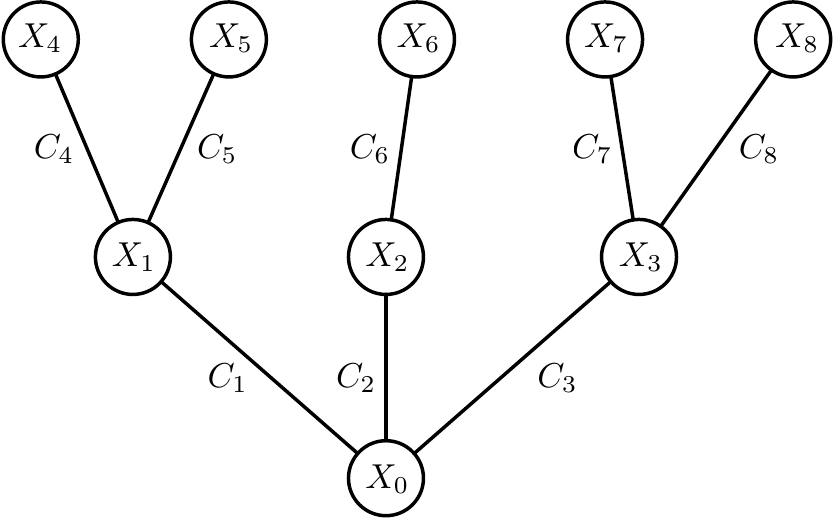}}
\caption{Fragment of the tree formed by the states.}\label{tree figure}
\vspace{-2mm}
\end{center}
\end{figure}
For any agent $a\in\mathcal{A}$ and any nodes $v,v'\in W$, we say that  $v\sim_a v'$ if all edges along  the unique simple path connecting nodes $v$ and $v'$ are labeled by agent $a$.

Throughout the rest of the article, for any nonempty sequence $w=x_1,\dots,x_n$, by $hd(w)$ we denote the element $x_n$  and by $w::y$ we denote sequence $x_1,\dots,x_n,y$.

The next lemma shows that the tree construction solves the challenge of the distributed knowledge discussed in the preamble to this section.
\begin{lemma}\label{transporter lemma}
If $w\sim_C w'$, then $\K_C\phi\in hd(w)$ iff $\K_C\phi\in hd(w')$.
\end{lemma}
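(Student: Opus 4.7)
The plan is to proceed by induction on the length of the unique simple path in the tree that witnesses $w\sim_C w'$. The base case $w=w'$ is trivial, so the real content is the inductive step: it suffices to prove the equivalence for any two states $u$ and $v$ joined by a single edge whose label $C_k$ contains $C$. By the definition of adjacency one of them, say $v$, has the form $u::(C_k,X_k)$ with $hd(v)=X_k$ and $hd(u)=X_{k-1}$, and item~4 of Definition~\ref{canonical state} gives the one-directional transfer $\{\psi : \K_{C_k}\psi\in hd(u)\}\subseteq hd(v)$.

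For the direction $\K_C\phi\in hd(u)\Rightarrow\K_C\phi\in hd(v)$, I would first apply Lemma~\ref{positive introspection lemma} to obtain $\K_C\K_C\phi\in hd(u)$, then use Epistemic Monotonicity (legal since $C\subseteq C_k$) to get $\K_{C_k}\K_C\phi\in hd(u)$, and finally transfer across the edge using Definition~\ref{canonical state} to conclude $\K_C\phi\in hd(v)$. Here I am implicitly using the standard fact that maximal consistent sets are closed under Modus Ponens from theorems of the system.

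For the direction $\K_C\phi\in hd(v)\Rightarrow\K_C\phi\in hd(u)$, I would argue by contradiction. If $\K_C\phi\notin hd(u)$, then by maximality $\neg\K_C\phi\in hd(u)$; Negative Introspection then gives $\K_C\neg\K_C\phi\in hd(u)$, Epistemic Monotonicity lifts this to $\K_{C_k}\neg\K_C\phi\in hd(u)$, and the transfer across the edge yields $\neg\K_C\phi\in hd(v)$, contradicting consistency of $hd(v)$.

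The main subtlety is handling the undirected nature of the tree: a path witnessing $w\sim_C w'$ may traverse edges in either direction, so the inductive step must treat both orientations of an edge uniformly. This is exactly what forces the backward direction to go through Negative Introspection rather than being a direct consequence of Definition~\ref{canonical state}; once that asymmetry is absorbed, the induction composes along the full path and the equivalence propagates from $w$ to $w'$.
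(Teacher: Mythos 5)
Your proof is correct and follows essentially the same route as the paper's: reduce to adjacent nodes on the unique simple path, use Positive Introspection plus Epistemic Monotonicity and the transfer clause of Definition~\ref{canonical state} for one direction, and Negative Introspection plus consistency for the other. (Minor note: you correctly use the inclusion $C\subseteq C_k$ to invoke Epistemic Monotonicity, whereas the paper's text states the inclusion backwards as $C_n\subseteq C$ --- an apparent typo.)
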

\begin{proof}
Assumption $w\sim_C w'$ implies that each edge along the unique simple path between nodes $w$ and $w'$ is labeled with all agents in coalition $C$. Thus, it suffices to show that  $\K_C\phi\in hd(w)$ iff $\K_C\phi\in hd(w')$ for any two {\em adjacent} nodes along this path. Indeed, without loss of generality, let 
\begin{eqnarray*}
w&=&X_0,C_1,X_1,\dots,C_{n-1},X_{n-1} \\
w'&=&X_0,C_1,X_1,\dots,C_{n-1},X_{n-1},C_n,X_n.
\end{eqnarray*}
The assumption that the edge between $w$ and $w'$ is labeled with all agents in coalition $C$ implies that $C_n\subseteq C$. Next, we show that $\K_C\phi\in hd(w)$ iff $\K_C\phi\in hd(w')$.

\noindent$(\Rightarrow):$ Suppose that $\K_C\phi\in hd(w)=X_{n-1}$. Thus, $X_{n-1}\vdash\K_C\K_C\phi$ by Lemma~\ref{positive introspection lemma}. Hence, $X_{n-1}\vdash\K_{C_n}\K_C\phi$ by the Epistemic Monotonicity axiom and because $C_n\subseteq C$. Hence,
$\K_{C_n}\K_C\phi\in X_{n-1}$ because set $X_{n-1}$ is maximal. Then, $\K_C\phi\in X_{n}=X(w')$ by Definition~\ref{canonical state}.

\noindent$(\Leftarrow):$ Suppose that $\K_C\phi\notin hd(w)=X_{n-1}$. Thus, $\neg\K_C\phi\in X_{n-1}$ because set $X_{n-1}$ is maximal. Hence, $X_{n-1}\vdash\K_C\neg\K_C\phi$ by the Negative Introspection axiom. Hence, $X_{n-1}\vdash\K_{C_n}\neg\K_C\phi$ by the Epistemic Monotonicity axiom and because $C_n\subseteq C$. Hence,
$\K_{C_n}\neg\K_C\phi\in X_{n-1}$ because set $X_{n-1}$ is maximal. Then, $\neg\K_C\phi\in X_{n}$ by Definition~\ref{canonical state}. Therefore, $\K_C\phi\notin X_{n}=hd(w')$ because set $X_{n}$ is consistent.
\end{proof}

This defines the states of the canonical game $G(X_0)$ and the indistinguishability relations $\{\sim_a\}_{a\in\mathcal{A}}$ on these states. We now will define the domain of actions and the mechanism of the canonical game. 


\begin{definition}
$\Delta$ is a set of all pairs $(\phi, z)$ such that $\phi \in \Phi$ is a formula, and $z \in \mathbb{Z}$ is an integer number.
\end{definition}

If $u$ is a pair $(x,y)$, then by $pr_1(u)$ and $pr_2(u)$ we mean elements $x$ and $y$ respectively.

\begin{definition}\label{canonical M}
Mechanism $M$ is the set of triples $(w,\delta,u)$ such that for any formula $[C]_B\phi\in hd(w)$ if
\begin{enumerate}
    \item $pr_1(\delta(c))=\phi$, for each $c\in C$,
    \item $pr_2(\delta(b))< pr_2(\delta(c))$ for each $b\in B$ and each $c\in C$,
\end{enumerate}
then $\phi\in hd(u)$.
\end{definition}

Figure~\ref{mechanism figure} describes a Battle of the Atlantic inspired example that illustrates the definition of the canonical mechanism. 
\begin{figure}[ht]
\begin{center}
\begin{tabular}{|l|l|}  \hline &\\ [-4pt]
$hd(w)$ & 
$[\mbox{British}]_{\mbox{\footnotesize German}}(\mbox{saved}),
[\mbox{German}]_{\mbox{\footnotesize British}}(\mbox{not saved})
$\\ [4pt]
British & $(\mbox{saved},17)$\\ [4pt]
German & $(\mbox{not saved},23)$ \\ [4pt]
Russian & $(\mbox{saved},29)$\\[7pt] \hline & \\[-7pt]
Outcome & not saved \\[4pt] \hline
\end{tabular}
\caption{Battle of the Atlantic Mechanism.}\label{mechanism figure}
\vspace{-2mm}
\end{center}
\end{figure}
Here set $hd(w)$ contains formulae  $[\mbox{British}]_{\mbox{\footnotesize German}}(\mbox{saved})$ and $[\mbox{German}]_{\mbox{\footnotesize British}}(\mbox{not saved})$. Thus, the mechanism enables both British and Germans to achieve their goal as long as their have intelligence about the move of the other party. British, Germans, and Russians have chosen actions
$(\mbox{saved},17)$, $(\mbox{not saved},23)$, and $(\mbox{saved}, 29)$ respectively. Thus, $pr_2(\delta(\mbox{British}))=17<23=pr_2(\delta(\mbox{Germans}))$. Then, according to Definition~\ref{canonical M}, statement ``saved'' (short for "Convoy is saved'') will belong to set $hd(u)$, where $u$ is the outcome state of the game. Note that although $pr_2(\delta(\mbox{German}))=23<29=pr_2(\delta(\mbox{Russians}))$, the Russian action did not save the convoy because statement $[\mbox{Russian}]_{\mbox{\footnotesize German}}(\mbox{saved})$ does not belong to set $hd(w)$.

\begin{definition}\label{canonical pi}
$\pi(p)=\{w\in W\;|\; p\in hd(w)\}$.
\end{definition}

This concludes the definition of the canonical game $G(X_0)=(W,\{\sim_a\}_{a\in\mathcal{A}},$ $\Delta,M,\pi)$.
The next important milestone in the proof of the completeness is what sometimes is called ``truth'' lemma that connects syntax and semantics sides in the canonical game construction. In our case, this is Lemma~\ref{main induction lemma}. Before that lemma, however, we state and prove two auxiliary statements that will be used in the induction step of the proof of Lemma~\ref{main induction lemma}.

\begin{lemma}\label{u exists lemma}
If $\neg\K_C\phi\in hd(w)$, then there is a state $u\in W$ such that $w\sim_C u$ and $\neg\phi\in hd(u)$.
\end{lemma}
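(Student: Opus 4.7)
The plan is to construct the desired state $u$ by extending $w$ by one step in the tree, using coalition $C$ as the edge label. Specifically, I would try to set $u = w :: (C, X)$ for some maximal consistent set $X$ that contains $\neg\phi$ together with every formula that $C$ distributively knows at $w$. If such an $X$ exists, then $u$ is a state by Definition~\ref{canonical state}, the unique edge on the path from $w$ to $u$ is labeled by every agent in $C$, so $w \sim_C u$ holds by the definition of $\sim_C$ given just after Definition~\ref{canonical state}, and $\neg\phi \in hd(u)$ by construction.

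First I would define the set
\[
Y = \{\psi \in \Phi \mid \K_C\psi \in hd(w)\} \cup \{\neg\phi\},
\]
and then invoke Lindenbaum's Lemma (Lemma~\ref{Lindenbaum's lemma}) to obtain the maximal consistent extension $X \supseteq Y$. For this to succeed, the main work is to verify that $Y$ is itself consistent.

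The hard part is exactly this consistency check, and it is the standard argument using the Distributivity axiom and the Epistemic Necessitation rule. Suppose for contradiction that $Y$ is inconsistent. Then there exist finitely many formulae $\K_C\psi_1,\dots,\K_C\psi_n \in hd(w)$ such that $\psi_1,\dots,\psi_n,\neg\phi \vdash \bot$. By the Deduction Lemma (Lemma~\ref{deduction lemma}) and propositional reasoning, we get
\[
\vdash \psi_1 \to (\psi_2 \to \dots \to (\psi_n \to \phi)\dots).
\]
Applying the Epistemic Necessitation rule followed by the Distributivity axiom and Modus Ponens $n$ times, we obtain
\[
\K_C\psi_1,\dots,\K_C\psi_n \vdash \K_C\phi.
\]
Since $\K_C\psi_i \in hd(w)$ for each $i$, this gives $hd(w) \vdash \K_C\phi$. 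But $\neg\K_C\phi \in hd(w)$ by hypothesis, contradicting the consistency of the maximal consistent set $hd(w)$.

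Having shown $Y$ consistent and extended it to a maximal consistent $X$, the sequence $u = w::(C,X)$ satisfies all four clauses of Definition~\ref{canonical state}: the length condition, the maximal-consistency condition, the coalition condition, and the inclusion $\{\psi \mid \K_C\psi \in hd(w)\} \subseteq X$ coming directly from $Y \subseteq X$. Hence $u \in W$, $w$ and $u$ are adjacent with edge label $C$, so $w \sim_C u$, and $\neg\phi \in X = hd(u)$, as required.
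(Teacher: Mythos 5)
Your proposal is correct and follows essentially the same route as the paper's own proof: the same set $\{\neg\phi\}\cup\{\psi\mid\K_C\psi\in hd(w)\}$, the same consistency argument via the Deduction Lemma, Epistemic Necessitation, and Distributivity, and the same one-step tree extension $w::C::\hat X$ verified against Definition~\ref{canonical state}. No gaps.
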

\begin{proof}
Consider set of formulae
$$
X=\{\neg\phi\}\cup\{\psi\;|\;\K_C\psi\in w\}.
$$
First, we prove that set $X$ is consistent. Suppose the opposite. Thus, there are formulae $\K_C\psi_1,\dots,\K_C\psi_n\in hd(w)$ such that
$$
\psi_1,\dots,\psi_n\vdash\phi.
$$
Hence, by applying $n$ times Lemma~\ref{deduction lemma},
$$
\vdash\psi_1\to(\psi_2\to\dots (\psi_n\to\phi)\dots).
$$
Then, by the Epistemic Necessitation inference rule,
$$
\vdash\K_C(\psi_1\to(\psi_2\to\dots (\psi_n\to\phi)\dots)).
$$
Thus, by the Distributivity axiom and the Modus Ponens inference rule,
$$
\vdash\K_C\psi_1\to\K_C(\psi_2\to\dots (\psi_n\to\phi)\dots)).
$$
Recall that $\K_C\psi_1\in hd(w)$ by the choice of formula $\K_C\psi_1$. Hence, by the Modus Ponens inference rule,
$$
hd(w)\vdash\K_C(\psi_2\to\dots (\psi_n\to\phi)\dots)).
$$
By repeating the previous step $n-1$ more times,
$$
hd(w)\vdash\K_C\phi.
$$
Hence, $\neg\K_C\phi\notin hd(w)$ due to the consistency of the set $hd(w)$. This contradicts the assumption of the lemma. Therefore, set $X$ is consistent. By Lemma~\ref{Lindenbaum's lemma}, there is a maximal consistent extension   $\hat{X}$ of the set $X$. Let $u$ be sequence $w::C::\hat{X}$. Note that $u\in W$ by Definition~\ref{canonical state} and the choice of set $X$, set $\hat{X}$ and sequence $u$. Furthermore, $w\sim_C u$ by the definition of relation $\sim_a$ on the set $W$. Finally, $\neg\phi\in X\subseteq\hat{X}=hd(u)$ again by the choice of set $X$, set $\hat{X}$ and sequence $u$.
\end{proof}

\begin{lemma}\label{beta exists lemma}
If $\neg[C]_B\phi\in hd(w)$, then there exists an action profile $\beta\in\Delta^B$ of coalition $B$ such that for each action profile $\gamma\in\Delta^C$ of coalition $C$ there is a complete action profile $\delta\in\Delta^\mathcal{A}$ and  states $w',u\in W$ such that $\beta=_B\delta$, $\gamma=_C\delta$, $w\sim_C w'$, $(w',\delta,u)\in M$, and $\neg\phi\in hd(u)$.
\end{lemma}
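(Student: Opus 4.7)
Fix $\beta(b) := (\top, 0)$ for every $b \in B$. Given any $\gamma \in \Delta^C$, set $w' := w$ (so $w \sim_C w'$), pick an integer $K$ strictly greater than $\max\{pr_2(\gamma(c)) : c \in C\}$, and define $\delta \in \Delta^{\mathcal A}$ by $\delta(b) := \beta(b)$ for $b \in B$, $\delta(c) := \gamma(c)$ for $c \in C$, and $\delta(a) := (\top, K)$ for $a \in \mathcal{A} \setminus (B \cup C)$. Let $F$ be the set of formulas $\psi$ that Definition~\ref{canonical M} forces into the successor state under this $\delta$. I will extend $F \cup \{\neg \phi\}$ to a maximal consistent set $\hat{X}$ via Lemma~\ref{Lindenbaum's lemma} and take $u := w :: \varnothing :: \hat{X}$. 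Since the Empty Coalition axiom gives $[\varnothing]_\varnothing \psi \in hd(w)$ for every $\K_\varnothing \psi \in hd(w)$, and any such formula is forced (its forcing conditions are vacuous), we get $u \in W$ by Definition~\ref{canonical state}, $(w, \delta, u) \in M$ by construction, and $\neg \phi \in hd(u)$. The lemma reduces to verifying that $F \cup \{\neg \phi\}$ is consistent.

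The vote and key design restricts $F$ sharply. For every nontrivial forcing $[C']_{B'}\psi \in hd(w)$ with $\psi \neq \top$, condition (a) forces $C' \subseteq C$ (agents in $B$ and outside $B \cup C$ all vote $\top$), and any $b \in B' \setminus (B \cup C)$ would have key $K$ exceeding every $\gamma$-key, violating condition (b); so $B' \subseteq B \cup C$. Moreover, condition (b) gives $pr_2(\gamma(b)) < pr_2(\gamma(c))$ for every $b \in B' \cap C$ and $c \in C'$.

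Suppose for contradiction $\vdash \psi_1 \to \cdots \to \psi_n \to \phi$ for some $\psi_1, \ldots, \psi_n \in F$ coming from forcings $[C'_k]_{B'_k}\psi_k \in hd(w)$. For each $k$, from $\vdash \psi_k \to \psi_k$ via Strategic Necessitation and one instance of the Cooperation axiom (with pairwise disjoint parameters $B'_k \cap C$, $B'_k \setminus C$, $C'_k$), I convert $[C'_k]_{B'_k}\psi_k$ into $[\tilde C'_k]_{B'_k \setminus C}\psi_k$, where $\tilde C'_k := C'_k \cup (B'_k \cap C) \subseteq C$; Intelligence Monotonicity (using $B'_k \setminus C \subseteq B$) then gives $[\tilde C'_k]_B \psi_k \in hd(w)$. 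After deduplicating forcings with equal $\psi$'s (forcings whose $C'$'s share an agent must share $\psi$ by condition (a)), I sort the remaining ones in nondecreasing order of $\min_{c \in C'_k} pr_2(\gamma(c))$; the strict key inequality then rules out any $c \in B'_{\sigma(j)} \cap C'_{\sigma(k)}$ for $j < k$, so $C^*_{k-1} := \bigcup_{j<k} \tilde C'_{\sigma(j)}$ is disjoint from $C'_{\sigma(k)}$.

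Starting from $[\varnothing]_B(\psi_1 \to \cdots \to \phi) \in hd(w)$ (Strategic Necessitation on the given propositional derivation), I iterate Cooperation: at step $k$, with invariant $[C^*_{k-1}]_B(\psi_{\sigma(k)} \to Y_k)$, a repetition of the absorption argument applied to $T_k := (B'_{\sigma(k)} \cap C) \setminus C^*_{k-1}$ together with Intelligence Monotonicity yields $[C'_{\sigma(k)} \cup T_k]_{B \cup C^*_{k-1}} \psi_{\sigma(k)}$, and Cooperation with coalitions $C^*_{k-1}$, $B$, and $C'_{\sigma(k)} \cup T_k$ (which equals $\tilde C'_{\sigma(k)} \setminus C^*_{k-1}$) advances the invariant to $[C^*_{k-1} \cup \tilde C'_{\sigma(k)}]_B Y_k$. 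After $n$ steps I obtain $[\bigcup_k \tilde C'_{\sigma(k)}]_B \phi$, and Lemma~\ref{subscript monotonicity lemma} gives $[C]_B \phi \in hd(w)$, contradicting $\neg [C]_B \phi \in hd(w)$. The hard part is this iterative chaining: the ``absorption'' helper moves each forcing's $C$-valued intelligence agents into the acting coalition, while the key-based sort ensures that the disjointness hypotheses of the Cooperation axiom remain satisfied at every step.
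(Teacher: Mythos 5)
Your construction of $\beta$, $\delta$, and the successor state, and your consistency argument for the non-degenerate forcings, follow the paper's proof in all essentials: the paper likewise votes $(\top,0)$ for the members of $B$ and $(\top,z_0)$ for the outsiders, sorts the acting coalitions of the relevant forcings by $rank(P_i)=\min_{p\in P_i}pr_2(\delta(p))$, and chains the Cooperation axiom along that order. Your only real deviation is bookkeeping: where the paper seeds the acting coalition with $R=C\setminus(P_1\cup\dots\cup P_n)$ once at the start and then enlarges each $Q_i$ to $B\cup R\cup P_1\cup\dots\cup P_{i-1}$ by Intelligence Monotonicity, you absorb the sets $B'_k\cap C$ into the acting coalitions on the fly with extra instances of Cooperation. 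That variant is fine.

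The gap is in the degenerate forcings. Definition~\ref{canonical M} forces $\psi$ into $hd(u)$ for \emph{every} formula $[\varnothing]_{B'}\psi\in hd(w)$, because both of its conditions are vacuous when the acting coalition is empty; so these $\psi$ belong to your set $F$ and must be covered by the consistency argument. But for such a forcing your key argument for $B'\subseteq B\cup C$ collapses (condition (b) is vacuous, so an agent of $B'$ outside $B\cup C$ violates nothing), the sort key $\min_{c\in C'}pr_2(\gamma(c))$ is undefined, and your absorption step followed by Intelligence Monotonicity cannot turn $[\varnothing]_{B'}\psi$ into anything of the form $[\cdot]_{B}\psi$ when $B'\not\subseteq B\cup C$, since Intelligence Monotonicity only \emph{enlarges} the subscript and no other axiom shrinks it. This is exactly the point where the paper invokes the None to Analyze axiom, $[\varnothing]_{B'}\psi\to[\varnothing]_\varnothing\psi$, which lets it fold all empty-coalition forcings into the block $\{\sigma\;|\;[\varnothing]_\varnothing\sigma\in hd(w)\}$ that is peeled off at the start of the derivation chain (and handled again as Case~II when verifying $(w,\delta,u)\in M$). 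You never use that axiom, and without it your derivation of $[C]_B\phi$ cannot be completed. A second, much smaller omission: if $\top$ occurs among the $\psi_k$ in the assumed derivation of $\phi$, you must first discard it as a redundant antecedent before your ``$\psi\neq\top$'' analysis applies; the paper records this as an explicit without-loss-of-generality step.
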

\begin{proof}
Let action profile $\beta\in \Delta^B$ of coalition $B$ be such that $\beta(b) = (\top, 0)$ for each agent $b\in B$. Consider any action profile $\gamma \in \Delta^C$ of coalition $C$. Choose an integer $z_0$ such that for each $a\in C$
\begin{equation}\label{choice of z0}
    pr_2(\gamma(a))< z_0.
\end{equation}
Such $z_0$ exists because coalition $C$ is a finite set of agents.
Define complete action profile $\delta\in \Delta^\mathcal{A}$ as follows
\begin{equation}\label{choice of delta}
\delta(a)=
\begin{cases}
\beta(a), & \mbox{if } a \in B, \\
\gamma(a), & \mbox{if } a \in C,\\
(\top, z_0),  & \mbox{otherwise.}
\end{cases}
\end{equation}
Note that $\neg[C]_B\phi\in hd(w)\subseteq\Phi$ by the assumption of the lemma. Thus, sets $B$ and $C$ are disjoint by Definition~\ref{Phi}. Thus, complete action profile $\delta$ is well-defined.

Consider set $X$ such that
\begin{eqnarray}
X&=&\{\neg\phi\}\cup \{\sigma\;|\; [\varnothing]_\varnothing\sigma\in hd(w)\} \cup \label{choice of X}\\
&&\{ \psi \;|\; [P]_Q \psi \in hd(w),P\neq\varnothing,  \forall p \in P (pr_1(\delta(p)) = \psi),\nonumber\\
&&\hspace{7mm}\forall q \in Q\,\forall p \in P (pr_2(\delta(q)) < pr_2(\delta(p)))\}\nonumber.
\end{eqnarray}
Next we show that set $X$ is consistent. Suppose the opposite. Thus,
\begin{equation}\label{sigma psi phi}
    \sigma_1,\dots,\sigma_m,\psi_1, \psi_2, \psi_3, \dots, \psi_n \vdash \phi
\end{equation}
for some formulae
\begin{equation}\label{choice of sigma}
[\varnothing]_\varnothing\sigma_1,\dots,[\varnothing]_\varnothing\sigma_m\in hd(w)
\end{equation}
 and some formulae
 \begin{equation}\label{choice of psi}
 [P_1]_{Q_1}\psi_1,\dots,[P_n]_{Q_n}\psi_n\in hd(w)
 \end{equation}
 such that
\begin{equation}\label{q<p equation}
    pr_2(\delta(q))<pr_2(\delta(p)),
\end{equation}
\begin{equation}\label{Pi nonempty}
    P_i\neq \varnothing,
\end{equation}
and
\begin{equation}\label{delta p}
    pr_1(\delta(p))=\psi_i
\end{equation}
for each $i\le n$, each $q\in Q_i$, and each $p\in P_i$. Without loss of generality, we can assume that formulae $\psi_1, \psi_2, \psi_3, \dots, \psi_n$ are distinct and none of them is equal to $\top$:
\begin{equation}\label{i neq j}
    \psi_i\neq \psi_j,
\end{equation}
\begin{equation}\label{psi_i not top}
    \psi_i\neq \top
\end{equation}
for each $i\le n$ and each $j\neq i$.
\begin{claim}\label{Ps disjoint}
Sets $P_1,\dots,P_n$ are pairwise disjoint.
\end{claim}
\begin{proof}
Consider any agent $a \in P_i \cap P_j$. Then, $\psi_i = pr_1(\delta(a)) = \psi_j$ by equation~(\ref{delta p}), which contradicts assumption~(\ref{i neq j}). 
\end{proof}

\begin{claim}\label{PiC claim}
$P_i\subseteq C$ for each $i\le n$.
\end{claim}
\begin{proof}
Consider any agent $a \in P_i$. Suppose that $a \notin C$. Then, $pr_1(\delta(a)) = \top$ by equation~(\ref{choice of delta}). At the same time, $pr_1(\delta(a))=\psi_i$ by equality~(\ref{delta p}) because $a\in P_i$. Hence, $\psi_i=\top$, which contradicts assumption~(\ref{psi_i not top}).
\end{proof}

\begin{claim}\label{QBC claim}
$Q_i\subseteq B\cup C$.
\end{claim}
\begin{proof}
Consider any agent $q\in Q_i$. Statement~(\ref{Pi nonempty}) implies that there is at least one agent $p\in P_i$. Then, $p\in C$ by Claim~\ref{PiC claim}. Thus, $pr_2(\delta(p))=pr_2(\gamma(p))<z_0$ due to equality~(\ref{choice of delta}) and inequality~(\ref{choice of z0}). Hence, $pr_2(\delta(q))<z_0$ by inequality~(\ref{q<p equation}). Therefore, $q\in B\cup C$ due to equality~(\ref{choice of delta}).
\end{proof}

For any nonempty finite set of agents $P\subseteq\mathcal{A}$ let 
\begin{equation}\label{rank definition}
    rank(P)=\min_{p\in P} pr_2(\delta(p)).
\end{equation}
Sets $P_1,\dots,P_n$ are nonempty by statement~(\ref{Pi nonempty}). Thus, $rank(P_i)$ is defined for each $i\le n$. Without loss of generality, we can assume that, see Figure~\ref{rank figure},
\begin{figure}[ht]
\begin{center}
\scalebox{0.7}{\includegraphics{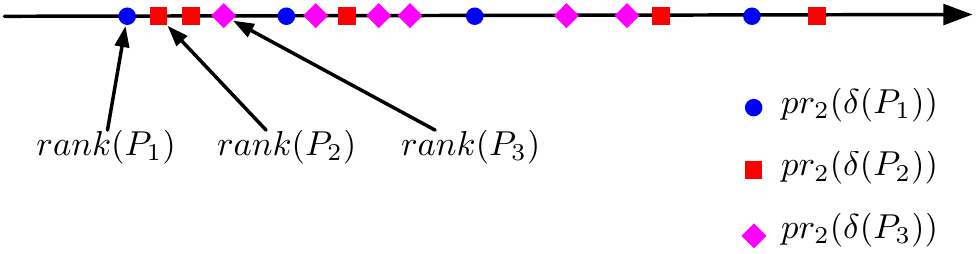}}
\caption{Sets $P_1,P_2,P_3,\dots$ and their ranks.}\label{rank figure}
\vspace{-2mm}
\end{center}
\end{figure}

\begin{equation}\label{rank order}
    rank(P_1)\le rank(P_2)\le \dots \le rank(P_n).
\end{equation}

\begin{claim}\label{QiPj claim}
Sets $Q_i$ and $P_j$ are disjoint for $1\le i\le j$.
\end{claim}
\begin{proof}
For any $q\in Q_i$ and any $p\in P_j$, by inequality~(\ref{q<p equation}) and definition~(\ref{rank definition}); assumption~(\ref{rank order}); and again definition~(\ref{rank definition}), 
$$
pr_2(\delta(q))<rank(P_i)\le rank(P_j)\le pr_2(\delta(p)).
$$
Therefore, sets $Q_i$ and $P_j$ are disjoint.
\end{proof}

Let
\begin{equation}\label{choice of R}
    R=C\setminus(P_1\cup\dots\cup P_n).
\end{equation}

\begin{claim}\label{BRPs disjoint}
Sets $B$, $R$, $P_1$, \dots, $P_n$ are pairwise disjoint. 
\end{claim}
\begin{proof}
Assumption $\neg[C]_B\phi\in hd(w)$ of the lemma implies that $\neg[C]_B\phi\in \Phi$. Thus, sets $B$ and $C$ are disjoint by Definition~\ref{Phi}. Hence, sets $B$ and $R$ are disjoint because of equation~(\ref{choice of R}) and set $B$ is disjoint with each of sets $P_1,\dots, P_n$ by Claim~\ref{PiC claim} and because sets $B$ and $C$ are disjoint. Also, set $R$ is disjoint with each of sets $P_1,\dots, P_n$ by equation~(\ref{choice of R}). Finally, sets $P_1,\dots, P_n$ are pairwise disjoint by Claim~\ref{Ps disjoint}.
\end{proof}

\begin{claim}\label{QiBRPs claim}
$Q_i\subseteq B\cup R\cup P_1\cup \dots\cup P_{i-1}$ for  $1\le i \le n$.
\end{claim}
\begin{proof}
Consider any agent $q\in Q_i$ such that $q\notin B\cup R$. It suffices to show that $q\in P_1\cup \dots\cup P_{i-1}$. Indeed, assumptions $q\in Q_i$ and $q\notin B$ imply that $q\in C$ by Claim~\ref{QBC claim}. Thus, $q\in P_1\cup\dots\cup P_n$ by the assumption $q\notin R$ and the definition of set $R$. Therefore, $q\in P_1\cup\dots\cup P_{i-1}$ by Claim~\ref{QiPj claim} because $q\in Q_i$.
\end{proof}

Let us now return to the proof of the lemma. Statement~(\ref{sigma psi phi}), by the Lemma~\ref{deduction lemma} applied $m+n$ times, implies that  
$$
\vdash \sigma_1 \to (\dots(\sigma_m\to (\psi_1 \to \dots ( \psi_n \to \phi) \dots ))\dots).
$$
Hence, by the Strategic Necessitation inference rule,
$$
\vdash [\varnothing]_{\varnothing}(\sigma_1 \to (\dots(\sigma_m\to (\psi_1 \to \dots ( \psi_n \to \phi) \dots ))\dots)).
$$
Thus, by the Cooperation axiom (where $B=C=D=\varnothing$) and the Modus Ponens inference rule,
$$
\vdash [\varnothing]_\varnothing\sigma_1 \to [\varnothing]_{\varnothing}(\sigma_2 \to (\dots(\sigma_m\to (\psi_1 \to \dots ( \psi_n \to \phi) \dots ))\dots)).
$$
Then, by the Modus Ponens inference rule and assumption~(\ref{choice of sigma}),
$$
hd(w) \vdash [\varnothing]_{\varnothing}(\sigma_2 \to (\dots(\sigma_m\to (\psi_1 \to \dots ( \psi_n \to \phi) \dots ))\dots)).
$$
By repeating the previous step $m-1$ times,
$$
hd(w) \vdash [\varnothing]_{\varnothing} (\psi_1 \to (\psi_2 \to (\psi_3 \to \dots ( \psi_n \to \phi) \dots ))). 
$$
Hence, by the Intelligence Monotonicity axiom and the Modus Ponens inference rule,
$$
hd(w) \vdash [\varnothing]_{B} (\psi_1 \to (\psi_2 \to (\psi_3 \to \dots ( \psi_n \to \phi) \dots ))). 
$$
Thus, by Lemma~\ref{subscript monotonicity lemma}, Claim~\ref{BRPs disjoint}, and the Modus Ponens inference rule,
$$
hd(w) \vdash [R]_{B} (\psi_1 \to (\psi_2 \to (\psi_3 \to \dots ( \psi_n \to \phi) \dots ))). 
$$
Then, by the Cooperation axiom, Claim~\ref{BRPs disjoint}, and the Modus Ponens inference rule,
$$
hd(w) \vdash [{P_1}]_{B,R} \psi_1 \to [{R, P_1}]_{B}(\psi_2 \to (\psi_3 \to \dots (\psi_n \to \phi) \dots)).
$$
At the same time, recall that $[{P_1}]_{Q_1}\psi_1\in hd(w)$ by assumption~(\ref{choice of psi}). Thus, $hd(w)\vdash [{P_1}]_{B,R}\psi_1$ by the Intelligence Monotonicity axiom and because $Q_1\subseteq B\cup R$ due to Claim~\ref{QiBRPs claim}. Hence, by the Modus Ponens inference rule,
$$
hd(w) \vdash [{R, P_1}]_{B}(\psi_2 \to (\psi_3 \to \dots (\psi_n \to \phi) \dots)).
$$
Then, by the Cooperation axiom, Claim~\ref{BRPs disjoint}, and the Modus Ponens inference rule,
$$
hd(w) \vdash [{P_2}]_{B, R, P_1}\psi_2 \to [{R, P_1, P_2}]_B(\psi_3 \to \dots (\psi_n \to \phi) \dots)).
$$
At the same time, recall that $[{P_2}]_{Q_2}\psi_2\in hd(w)$ by assumption~(\ref{choice of psi}). Thus, $hd(w)\vdash [{P_2}]_{B,R,P_1}\psi_2$ by the Intelligence Monotonicity axiom and because $Q_2\subseteq B,R,P_1$ due to Claim~\ref{QiBRPs claim}. Hence, by the Modus Ponens inference rule,
$$
hd(w) \vdash [{R, P_1, P_2}]_B(\psi_3 \to \dots (\psi_n \to \phi) \dots)).
$$
By repeating the previous step $n-2$ more times,
$$
 hd(w) \vdash [{R, P_1, P_2, \dots, P_n}]_B\phi.
$$
Equation~(\ref{choice of R}) implies that $R\subseteq C$. Thus, $R, P_1, P_2, \dots, P_n\subseteq C$ by Claim~\ref{PiC claim}. Then,
$ hd(w) \vdash [C]_B \phi$
by Lemma~\ref{subscript monotonicity lemma}, which contradicts the assumption $\neg [C]_B \phi\in hd(w)$ and the consistency of set $hd(w)$. Therefore, set $X$, as defined by equation~(\ref{choice of X}), is consistent. By Lemma~\ref{Lindenbaum's lemma}, there exists a  maximal consistent extension $\hat{X}$ of the set $X$. 

Let $w'$ be state $w$ and $u$ to be the sequence $w::\varnothing::\hat{X}$.

\begin{claim}
$u\in W$.
\end{claim}
\begin{proof}
By Definition~\ref{canonical state}, it suffices to prove that $$\{\phi\in\Phi\;|\;\K_\varnothing\phi\in hd(w)\}\subseteq hd(u).$$ Indeed, let $\K_\varnothing\phi\in hd(w)$. Thus, $hd(w)\vdash [\varnothing]_\varnothing\phi$ by the Empty Coalition axiom. Hence, $[\varnothing]_\varnothing\phi\in hd(w)$ due to the maximality of the set $hd(w)$. Then, $\phi\in X$ by equation~(\ref{choice of X}). Thus, $\phi\in \hat{X}$ by the choice of set $\hat{X}$. Therefore, $\phi\in hd(u)$ by the choice of sequence $u$.
\end{proof}

Note that $\beta=_B\delta$ because of equation~(\ref{choice of delta}) and the assumption $\beta\in\Delta^B$ of the lemma. Similarly, $\gamma=_C\delta$. Also, $w\sim_C w'$ by Lemma~\ref{sim C is equivalence relation lemma} because $w'=w$. Additionally, $\neg\phi\in hd(u)$ because, due to equation~(\ref{choice of X}), we have $\neg\phi\in X\subseteq\hat{X}=hd(u)$.

Since $w=w'$, to finish the proof of the lemma, we need to show that $(w,\delta,u)\in M$. Consider any formula $[P]_Q\psi\in hd(w)$ such that $pr_1(\delta(p))=\psi$ for each agent $p\in P$ and $pr_2(\delta(q))<pr_2(\delta(p))$ for each agent $q\in Q$ and each agent $p\in P$. By Definition~\ref{canonical M}, it suffices to prove that $\psi\in hd(u)$. We consider the following tow cases separately.

\noindent{\bf Case I}: $P\neq \varnothing$. Thus, $\psi\in X$ by equation~(\ref{choice of X}). Therefore, $\psi\in X\subseteq\hat{X}=hd(u)$. 

\noindent{\bf Case II}: $P= \varnothing$. Then, assumption $[P]_Q\psi\in hd(w)$ can be rewritten as $[\varnothing]_Q\psi\in hd(w)$. Thus, $hd(w)\vdash [\varnothing]_\varnothing\psi$ by the None to Analyze axiom. Hence, $[\varnothing]_\varnothing\psi\in hd(w)$ because of the maximality of the set $hd(w)$. Thus, $\psi\in X$ by equation~(\ref{choice of X}). Therefore, $\psi\in X\subseteq\hat{X}=hd(u)$. 

This concludes the proof of Lemma~\ref{beta exists lemma}.
\end{proof}

We are now ready to state and to prove the main induction lemma of the proof of the completeness, which sometimes also is referred to as truth lemma.

\begin{lemma}\label{main induction lemma}
$w\Vdash \phi$ iff $\phi\in hd(w)$ for each formula $\phi\in\Phi$.
\end{lemma}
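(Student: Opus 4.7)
The plan is to proceed by induction on the structural complexity of the formula $\phi$. The base case for propositional variables follows directly from Definition~\ref{canonical pi}, and the cases for $\neg$ and $\to$ are the standard Boolean arguments. So the real work is in the two modal cases.

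For the case $\phi = \K_C\psi$, the $(\Leftarrow)$ direction goes as follows. Assume $\K_C\psi \in hd(w)$ and pick any $w'$ with $w \sim_C w'$. By Lemma~\ref{transporter lemma}, $\K_C\psi \in hd(w')$, and by the Truth axiom plus maximality, $\psi \in hd(w')$. The induction hypothesis gives $w' \Vdash \psi$, so $w \Vdash \K_C\psi$. For $(\Rightarrow)$, I would prove the contrapositive: if $\K_C\psi \notin hd(w)$, then by maximality $\neg\K_C\psi \in hd(w)$, so Lemma~\ref{u exists lemma} produces a state $u$ with $w \sim_C u$ and $\neg\psi \in hd(u)$; the induction hypothesis gives $u \nVdash \psi$, hence $w \nVdash \K_C\psi$.

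For the case $\phi = [C]_B\psi$, the $(\Rightarrow)$ direction (which I expect to be the main obstacle) is handled as follows. Assume $[C]_B\psi \in hd(w)$. Given any action profile $\beta \in \Delta^B$, pick an integer $z_0$ strictly larger than $\max_{b\in B} pr_2(\beta(b))$ (well-defined since $B$ is finite), and define $\gamma \in \Delta^C$ by $\gamma(c) = (\psi, z_0)$ for each $c \in C$. Now consider any complete $\delta$ extending $\beta$ and $\gamma$, any $w' \sim_C w$, and any $u$ with $(w',\delta,u) \in M$. The key step is to establish $[C]_B\psi \in hd(w')$: by the Strategic Introspection axiom and maximality, $\K_C[C]_B\psi \in hd(w)$, and then Lemma~\ref{transporter lemma} transports this to $w'$, from which the Truth axiom yields $[C]_B\psi \in hd(w')$. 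Since $pr_1(\delta(c)) = \psi$ for all $c \in C$ and $pr_2(\delta(b)) < z_0 = pr_2(\delta(c))$ for all $b \in B$, $c \in C$, Definition~\ref{canonical M} forces $\psi \in hd(u)$, and the induction hypothesis gives $u \Vdash \psi$, so $w \Vdash [C]_B\psi$.

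The $(\Leftarrow)$ direction of the strategic case is the contrapositive: if $[C]_B\psi \notin hd(w)$, then $\neg[C]_B\psi \in hd(w)$ by maximality, and Lemma~\ref{beta exists lemma} directly supplies an action profile $\beta$ witnessing that no $\gamma$ succeeds uniformly; the $u$ produced by that lemma satisfies $\neg\psi \in hd(u)$, so by induction $u \nVdash \psi$, and thus $w \nVdash [C]_B\psi$. The delicate aspect of the whole argument is ensuring, in the $(\Rightarrow)$ direction for $[C]_B\psi$, that the chosen strategy $\gamma$ works from \emph{every} $C$-indistinguishable state $w'$; this is precisely why the Strategic Introspection axiom is needed and why the ``key'' trick of integer-tagged actions in Definition~\ref{canonical M} pays off, since it converts the syntactic condition $[C]_B\psi \in hd(w')$ into the semantic guarantee about outcome states.
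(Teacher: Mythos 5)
Your proposal is correct and follows essentially the same route as the paper's own proof: the same structural induction, with Lemma~\ref{u exists lemma} and Lemma~\ref{transporter lemma} handling the $\K_C\psi$ case, and Lemma~\ref{beta exists lemma} plus the Strategic Introspection axiom, the transporter lemma, and the integer-key choice $\gamma(c)=(\psi,z_0)$ handling the $[C]_B\psi$ case (your $(\psi,z_0)$ is in fact the intended reading of the paper, which has a small typo writing $(\phi,z_0)$ there). The only cosmetic issue is that in the $[C]_B\psi$ case your $(\Rightarrow)$/$(\Leftarrow)$ labels are swapped relative to the orientation ``$w\Vdash\phi$ iff $\phi\in hd(w)$'' used in the statement.
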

\begin{proof}
We prove the lemma by the induction on the structural complexity of formula $\phi$. The case follows from Definition~\ref{sat} and Definition~\ref{canonical pi}. The case when formula $\phi$ is a negation or an implication follows from Definition~\ref{sat} and the assumption that set $hd(w)$ is a maximal consistent set of formulae in the standard way.

Suppose that formula $\phi$ has the form $\K_C\psi$.

\noindent $(\Rightarrow)$: Suppose that $\K_C\psi\notin hd(w)$. Thus, $\neg\K_C\psi\in hd(w)$ due to the maximality of the set $hd(u)$. Hence, by Lemma~\ref{u exists lemma},  there is a state $u\in W$ such that $w\sim_C u$ and $\neg\psi\in hd(u)$. Then, $\psi\notin hd(u)$ due to the consistency of the set $hd(u)$. Thus, $u\nVdash\psi$ by the induction hypothesis. Therefore, $w\nVdash\K_C\psi$ by Definition~\ref{sat}. 

\noindent $(\Leftarrow)$: Assume that $\K_C\psi\in hd(w)$. Consider any state $u\in W$ such that $w\sim_C u$. By Definition~\ref{sat}, it suffices to show that $u\Vdash \psi$. Indeed, by Lemma~\ref{transporter lemma}, assumptions $\K_C\psi\in hd(w)$ and $w\sim_C u$ imply that $\K_C\psi\in hd(u)$. Hence, $hd(u)\vdash\psi$ by the Truth axiom and the Modus Ponens inference rule. Thus, $\psi\in hd(u)$ due to the maximality of the set $hd(u)$. Therefore, $u\Vdash u$ by the induction hypothesis.

Suppose that formula $\phi$ has the form $[C]_B\psi$.

\noindent $(\Rightarrow)$: Assume that $[C]_B\psi\notin hd(w)$. Hence, $\neg[C]_B\psi\in hd(w)$ due to the maximality of the set $hd(w)$. Thus, by Lemma~\ref{beta exists lemma}, there exists an action profile $\beta\in\Delta^B$ of coalition $B$ such that for each action profile $\gamma\in\Delta^C$ of coalition $C$ there is a complete action profile $\delta\in\Delta^\mathcal{A}$ and  states $w',u\in W$ such that $\beta=_B\delta$, $\gamma=_C\delta$, $w\sim_C w'$, $(w',\delta,u)\in M$, and $\neg\psi\in hd(u)$. Note that $\neg\psi\in hd(u)$ implies $\psi\notin hd(u)$ due to the consistency of the set $hd(u)$, which, in term implies $u\nVdash\psi$ by the induction hypothesis. 

Thus, there exists an action profile $\beta\in\Delta^B$ of coalition $B$ such that for each action profile $\gamma\in\Delta^C$ of coalition $C$ there is a complete action profile $\delta\in\Delta^\mathcal{A}$ and  states $w',u\in W$ such that $\beta=_B\delta$, $\gamma=_C\delta$, $w\sim_C w'$, $(w',\delta,u)\in M$, and $u\nVdash\psi$. Therefore, $w\nVdash[C]_B\psi$ by Definition~\ref{sat}. 

\noindent $(\Leftarrow)$: Assume that $[C]_B\psi\in hd(w)$. Consider any action profile $\beta\in \Delta^B$ of coalition $B$. Set $B$ is finite by Definition~\ref{Phi}. Let $z_0$ be any integer number such that $pr_2(\beta(b))<z_0$ for each agent $b\in B$. Define action profile $\gamma\in\Delta^C$ as $\gamma(c)=(\phi,z_0)$ for each $c\in C$. Consider any complete action profile $\delta\in \Delta^\mathcal{A}$, any state $w'\in W$, and any state $u\in W$ such that $\beta=_B\delta$, $\gamma=_C\delta$, $w\sim_C w'$, and $(w',\delta,u)\in M$. By Definition~\ref{sat}, it suffices to show that $u\Vdash\psi$.

Assumption $[C]_B\psi\in hd(w)$ implies that $hd(w)\vdash \K_C[C]_B\psi$ by the Strategic Introspection axiom and the Modus Ponens inference rule. Thus, $\K_C[C]_B\psi\in hd(w)$ by the maximality of the set $hd(w)$. Hence, $\K_C[C]_B\psi\in hd(w')$ by the assumption $w\sim_C w'$ and Lemma~\ref{transporter lemma}. Then, $hd(w')\vdash [C]_B\psi$ by the Truth axiom and the Modus Ponens inference rule. Hence, $[C]_B\psi\in hd(w')$ due to the maximality of the set $hd(w')$.

By the choice of action profile $\gamma$ and the assumption $\gamma=_C\delta$, for each $c\in C$, we have $pr_1(\delta(c))=pr_1(\gamma(c))=\phi$. At the same time, by the assumption $\beta=_B\delta$, the choice of integer $z_0$, the choice of action profile $\gamma$, and the assumption $\gamma=_C\delta$,
$$
pr_2(\delta(b))=pr_2(\beta(b))<z_0=pr_2(\gamma(c))=pr_2(\delta(c))
$$
for each agent $b\in B$ and each agent $c\in C$.

Thus, $\psi\in hd(u)$ by Definition~\ref{canonical M} and the assumption $(w',\delta,u)\in M$. Therefore, $u\Vdash\psi$ by the induction hypothesis.
\end{proof}

Now we are ready to state and to prove the strong completeness of our logical system.

\begin{theorem}
If $Y\nvdash\phi$, then there is a state $w$ of a game such that $w\Vdash\chi$ for each $\chi\in Y$ and $w\nVdash\phi$. 
\end{theorem}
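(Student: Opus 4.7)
The plan is the standard Henkin-style argument that reuses the canonical game $G(X_0)$ and the truth lemma (Lemma~\ref{main induction lemma}) developed throughout this section. All the genuine work is already done; what remains is to instantiate the construction at the right maximal consistent set.

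First I would observe that $Y \nvdash \phi$ implies that the set $Y \cup \{\neg\phi\}$ is consistent. Indeed, if $Y, \neg\phi \vdash \bot$, then the deduction lemma (Lemma~\ref{deduction lemma}) gives $Y \vdash \neg\phi \to \bot$, whence $Y \vdash \phi$ by propositional reasoning, contradicting the hypothesis.

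Next, by Lindenbaum's lemma (Lemma~\ref{Lindenbaum's lemma}) I would extend $Y \cup \{\neg\phi\}$ to a maximal consistent set $X_0$, and then use this particular $X_0$ to set up the canonical game $G(X_0) = (W, \{\sim_a\}_{a \in \mathcal{A}}, \Delta, M, \pi)$ defined above in the section. Let $w$ be the one-element sequence $X_0 \in W$; this qualifies as a state by Definition~\ref{canonical state} in the degenerate case $n = 0$, and by construction $hd(w) = X_0$.

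Finally, the truth lemma closes the argument. For each $\chi \in Y$, membership $\chi \in X_0 = hd(w)$ gives $w \Vdash \chi$ by Lemma~\ref{main induction lemma}. Since $\neg\phi \in X_0$, the consistency of $X_0$ forces $\phi \notin hd(w)$, so Lemma~\ref{main induction lemma} again yields $w \nVdash \phi$. The only ``step'' that one might flag as an obstacle is checking that the empty-history sequence $X_0$ is admitted as a state of $G(X_0)$, which is immediate from Definition~\ref{canonical state}; the substantive content of the completeness proof has already been absorbed into the tree-based canonical construction and into Lemma~\ref{beta exists lemma}.
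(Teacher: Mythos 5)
Your proposal is correct and follows essentially the same route as the paper's own proof: extend $Y\cup\{\neg\phi\}$ (after verifying its consistency) to a maximal consistent set $X_0$ via Lemma~\ref{Lindenbaum's lemma}, take $w$ to be the single-element sequence $X_0$ in the canonical game $G(X_0)$, and conclude with Lemma~\ref{main induction lemma}. Your explicit check that $Y\cup\{\neg\phi\}$ is consistent is a small refinement the paper leaves implicit; otherwise the two arguments coincide.
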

\begin{proof}
Suppose that $Y\nvdash\phi$. By Lemma~\ref{Lindenbaum's lemma}, there exists a  maximal consistent set of formulae $X_0$ such that  $Y\cup\{\neg\phi\}\subseteq X_0$. Let $w$ be the single-element sequence $X_0$. By Definition~\ref{canonical state}, sequence $w$ is a state of the canonical game $G(X_0)$. Note that $\chi\in X_0=hd(w)$ for each formula $\chi\in Y$ and $\neg\phi\in X_0=hd(w)$ by the choice of set $X_0$ and the choice of sequence $w$. Thus, $w\Vdash\chi$ for each $\chi\in Y$ and  $w\Vdash\neg\phi$ by Lemma~\ref{main induction lemma}. Therefore, $w\nVdash\phi$ by Definition~\ref{sat}.
\end{proof}

\section{Conclusion}\label{concusion section}

In this article we proposed the notion of a strategy with intelligence and gave a sound and complete axiomatization of a bimodal logic that describes the interplay between strategic power with intelligence and the distributed knowledge modalities in the setting of strategic games with imperfect information. A natural question is decidability of the proposed logical system. Unfortunately the standard filtration technique~\cite{g72jpl} can not be easily applied here to produce a finite model. Indeed, it is crucial for the proof of the completeness, see Definition~\ref{canonical M} that for each action there is another action with a higher value of the second component. Thus, for the proposed construction to work, the domain of choices must be infinite. One perhaps might be able to overcome this by changing the second component of the action from an infinite linear ordered set to a finite circularly ``ordered'' set as in rock-article-scissors game.  

Another possible extension of this work is to consider modality $\K_C^B$ that captures the knowledge of coalition $C$ after coalition $B$ disclosed to $C$ the actions that it intends to take.


\bibliographystyle{elsarticle-num}
\bibliography{sp}  

\vfill

\pagebreak

\end{document}